\long\def\ca#1\cb{} 
\newcommand{\norm}[2][]{#1| \! #1| #2 #1| \! #1|}
\newcommand{\poly}{\operatorname{poly}}
\newcommand{\DC}{\mathcal{D}}
\newcommand{\EC}{\mathcal{E}}
\newcommand{\LC}{\mathcal{L}}
\newcommand{\OC}{\mathcal{O}}
\newcommand{\RC}{\mathcal{R}}
\newcommand{\SC}{\mathcal{S}}
\newcommand{\Tr}{{\rm Tr}}
\renewcommand{\geq}{\geqslant}
\renewcommand{\leq}{\leqslant}
\DeclareMathOperator*{\argmin}{arg\,min}
\renewcommand{\vec}[1]{\boldsymbol{#1}}  
\newcommand*{\id}{\openone}
\newcommand{\tin}{{\text{in}}}
\newtheorem{theorem}{Theorem}
\newtheorem{lemma}{Lemma}
\newtheorem{corollary}{Corollary}
\begin{document}
\title{Inference-Based Quantum Sensing}

\author{C. Huerta Alderete}
\thanks{The two first authors contributed equally.}
\affiliation{Information Sciences, Los Alamos National Laboratory, Los Alamos, NM 87545, USA}
\affiliation{Materials Physics and Applications Division, Los Alamos National Laboratory, Los Alamos, NM 87545, USA.
}
\affiliation{Quantum Science Center, Oak Ridge, TN 37931, USA}

\author{Max Hunter Gordon}
\thanks{The two first authors contributed equally.}
\affiliation{Theoretical Division, Los Alamos National Laboratory, Los Alamos, NM 87545, USA}
\affiliation{Instituto de Física Teórica, UAM/CSIC, Universidad Autónoma de Madrid, Madrid 28049, Spain}

\author{Fr\'{e}d\'{e}ric Sauvage}
\affiliation{Theoretical Division, Los Alamos National Laboratory, Los Alamos, NM 87545, USA}

\author{Akira Sone}
\affiliation{Aliro Technologies, Inc, Boston, MA 02135, USA}

\author{Andrew T. Sornborger}
\affiliation{Information Sciences, Los Alamos National Laboratory, Los Alamos, NM 87545, USA}
\affiliation{Quantum Science Center, Oak Ridge, TN 37931, USA}

\author{Patrick J. Coles}
\affiliation{Theoretical Division, Los Alamos National Laboratory, Los Alamos, NM 87545, USA}
\affiliation{Quantum Science Center, Oak Ridge, TN 37931, USA}

\author{M. Cerezo}
\email{cerezo@lanl.gov} 
\affiliation{Information Sciences, Los Alamos National Laboratory, Los Alamos, NM 87545, USA}
\affiliation{Quantum Science Center, Oak Ridge, TN 37931, USA}

\begin{abstract}
In a standard Quantum Sensing (QS) task one aims at estimating an unknown parameter $\theta$, encoded into an $n$-qubit probe state, via measurements of the system. The success of this task hinges on the ability to correlate changes in the parameter to changes in the system response $\RC(\theta)$ (i.e., changes in the measurement outcomes). For simple cases the  form of $\RC(\theta)$ is known, but the same cannot be said for realistic scenarios, as no general closed-form expression  exists. In this work we present an inference-based scheme for QS. We show that, for a general class of unitary families of encoding, $\RC(\theta)$  can be fully characterized by only measuring the system response at $2n+1$ parameters. This allows us to infer the value of an unknown parameter given the measured response, as well as to determine the sensitivity of the scheme, which characterizes its overall performance. We show that  inference error is, with high probability, smaller than $\delta$, if one measures the system response with a number of shots that scales only as $\Omega(\log^3(n)/\delta^2)$.  Furthermore, the framework presented can be broadly applied as it remains valid for arbitrary probe states and measurement schemes, and, even holds in the presence of quantum noise. We also discuss how to extend our results beyond unitary families. Finally, to showcase our method we implement it for a QS task on real quantum hardware, and in numerical simulations.  
\end{abstract}

\maketitle

\textit{Introduction.} Quantum Sensing (QS) is one of the most promising applications for  quantum technologies~\cite{degen2017quantum}.  In QS experiments one uses a quantum system as a probe to interact with an environment. Then, by measuring the system, one aims at learning some relevant property of the environment (usually some characteristic parameter) with a precision and sensitivity that are higher than those achievable by any classical system~\cite{giovannetti2006quantum}. QS has applications in a wide range of fields such as  quantum magnetometry~\cite{taylor2008high, bhattacharjee2020quantum, barry2016optical,casola2018probing}, thermometry~\cite{correa2015individual,de2016local,sone2018quantifying,sone2019nonclassical},  dark matter detection~\cite{rajendran2017method}, and gravitational wave detection~\cite{mcculler2020frequency,tse2019quantum}.

In a QS experiment one first prepares an $n$-qubit probe state $\rho$ that is as sensitive as possible to an external parameter $\theta$ of interest.  This ensures that upon encoding two distinct parameters $\theta$ and $\theta'$ on the system, the respective measurements associated to $\rho_{\theta}$ and $\rho_{\theta'}$  will be sufficiently distinguishable, a prerequisite to any task of sensing. 
Second, one  obtains the system response $\RC(\theta)$ to the external interaction by  measuring some observable over $\rho_\theta$. Third, if the functional form of $\RC(\theta)$ is known and invertible,  one can infer the value of $\theta$ from measurement outcomes, as well as estimate the sensitivity of the  QS scheme.

In simple cases all the  previous steps are well characterized. For instance, in an idealized magnetometry experiment it is known that the optimal probe state is the $n$-qubit Greenberg-Horne-Zeilinger (GHZ) state, while the optimal measurement is a parity measurement~\cite{greenberger1990bells,leibfried2004toward}. In this case, $\RC(\theta)=\cos(n \theta)$ which allows one to obtain the magnetic field as $\theta=\cos^{-1}(\RC(\theta))/n$ (assuming $\theta\in(-\pi/n,\pi/n)$), and the state's sensitivity as $(\Delta\theta)^2=1/n^2$, which corresponds to  the Heisenberg limit~\cite{giovannetti2006quantum}. However, the situation becomes more involved in realistic scenarios where the system dynamics are not known, and hence where the explicit functional form of $\RC(\theta)$ may not be accessible. For instance, when noise in the magnetometry setting is taken into account, the GHZ state is no longer optimal~\cite{huelga1997improvement,koczor2020variational,fiderer2019maximal}. In this case the true response $\RC(\theta)$ will inevitably deviate from the idealized cosine formula, limiting the extent to which $\theta$ can be accurately estimated. While recent works have focused on maximizing the sensitivity of QS protocols in noisy situations, by means of variational approaches~\cite{cerezo2020variationalreview,koczor2020variational,beckey2020variational,sone2020generalized,cerezo2021sub,kaubruegger2021quantum,meyer2020variational}, methods to recover the true  $\RC(\theta)$ in-situ are still lacking.

Here we introduce a  data-driven inference method which allows us to efficiently characterize the exact functional form of $\RC(\theta)$ for a general class of unitary families. We show that $\RC(\theta)$ can be expressed as a trigonometric polynomial of degree $n$, such that it can be fully determined by only measuring the system response at a set of $2n+1$ known parameters. We then discuss how the inferred function can be used to estimate the value of \textit{any} unknown parameter, as well as the sensitivity of the  scheme. Moreover, we rigorously analyze the inference error. Finally, we show that our method can be extended to cases where the system response is no longer exactly a trigonometric polynomial, but can still be approximated by one.  The  applications of the inference scheme are demonstrated in both numerical simulations as well as real implementations on a quantum computer.

\textit{Results.} Here  we consider a single-parameter QS setting employing an $n$-qubit probe state $\rho$ to estimate a parameter $\theta$. As shown in Fig.~\ref{fig:1}, $\rho$ is prepared by sending a fiduciary state $\rho_{\tin}$ through a state preparation channel $\EC$ such  that $\EC(\rho_\tin)=\rho$. We focus on the case of unitary families where the parameter encoding mechanism is of the form
\begin{equation}\label{eq:unitary_family}
   \mathcal{S}_\theta(\rho) = e^{-i \theta H/2}\rho ~e^{i \theta H/2}= \rho_{\theta}\,.
\end{equation}
Here, $H$ is a Hermitian operator such that $H=\sum_j h_j$ with $h_j^2=\id$, and $[h_j,h_{j'}]=0$, $\forall j, j'$. As shown  below, the Hamiltonian in a magnetometry task is precisely of this form.  We allow for the possibility of sending $\rho_\theta$ through a second pre-measurement channel $\DC$ which outputs an $m$-qubit state $\DC(\rho_\theta)$ (with $m\leq n$), over which we measure the expectation value of an observable $O$, with $\norm{O}_\infty\leq 1$. The system response is thus defined as 
\begin{align}\label{eq:response}
    \RC(\theta)&=\Tr[\DC\circ\SC_{\theta}\circ\EC(\rho_\tin)O]\,.
\end{align}
This setting encompasses cases where $\EC$ or $\DC$ are noisy channels, as well as cases of imperfect parameter encoding where a $\theta$-independent noise channel acts after $\mathcal{S}_\theta$, as is further discussed  in the Supplemental Material (SM) \footnote{See Supplemental Material which contains additional details and proofs as well as Refs. \cite{hong1992lower,horn1991topics, jackson1913accuracy,petras1995error} }.

Leveraging tools from the quantum machine learning literature~\cite{nakanishi2020sequential} we prove the following theorem.
\begin{theorem}\label{theo:trig-pol}
Let $\RC(\theta) $ be the response function in Eq.~\eqref{eq:response} for a unitary family as in Eq.~\eqref{eq:unitary_family}. Then, for any $\EC$, $\DC$ and measurement operator $O$, $\RC(\theta) $ can be exactly expressed as a trigonometric polynomial of degree $n$. That is, 
\begin{align}\label{eq:trig_response}
    \RC(\theta) &= \sum_{s=1}^n \left[a_{s} \cos(s\theta) + b_{s} \sin(s\theta)\right] + c\,,
\end{align} 
with $\{a_s,b_s\}_{s=1}^n$ and $c$ being real valued coefficients.
\end{theorem}
Notably, Theorem~\ref{theo:trig-pol} determines the \textit{exact} functional relation between the encoded parameter $\theta$ and the system response. Furthermore, the $2n+1$ coefficients  $\{a_s,b_s\}_{s=1}^n$ and $c$, that are not known a priori, can be efficiently estimated by means of a trigonometric interpolation~\cite{zygmund2002trigonometric}. This is readily achieved by measuring the system responses at a set of predefined parameters $P=\{\theta_k\}_{k=1}^{2n+1}$ (see Fig.~\ref{fig:1}), as this leads to a system of $2n+1$ equations with $2n+1$ unknown variables. Hence, one needs to solve a linear system problem of the form $A\cdot\boldsymbol{x}=\boldsymbol{d}$. Here,  $\vec{x} =(a_1,\ldots a_{n}, b_1,\ldots b_{n}, c)$ is the vector of unknown coefficients, $\vec{d} = (\RC(\theta_1),  \ldots, \RC(\theta_{2n+1}))$ is a vector of measured system responses across $P$ and $A$ is a $(2n+1)\times(2n+1)$ matrix with elements $A_{kj}= \cos{(j \theta_{k})}$ for $j=1,\ldots n$, $A_{kj}= \sin{(j \theta_{k})}$ for $j=n+1,\ldots2n$ and $A_{k(2n+1)}=1$. Thus, solving $\vec{x}=A^{-1}\cdot\vec{d}$ allows us to fully characterize $\RC(\theta)$. In the SM we provide additional details on this linear system problem.

\begin{figure}[t!]
	\includegraphics[width= 1 \columnwidth]{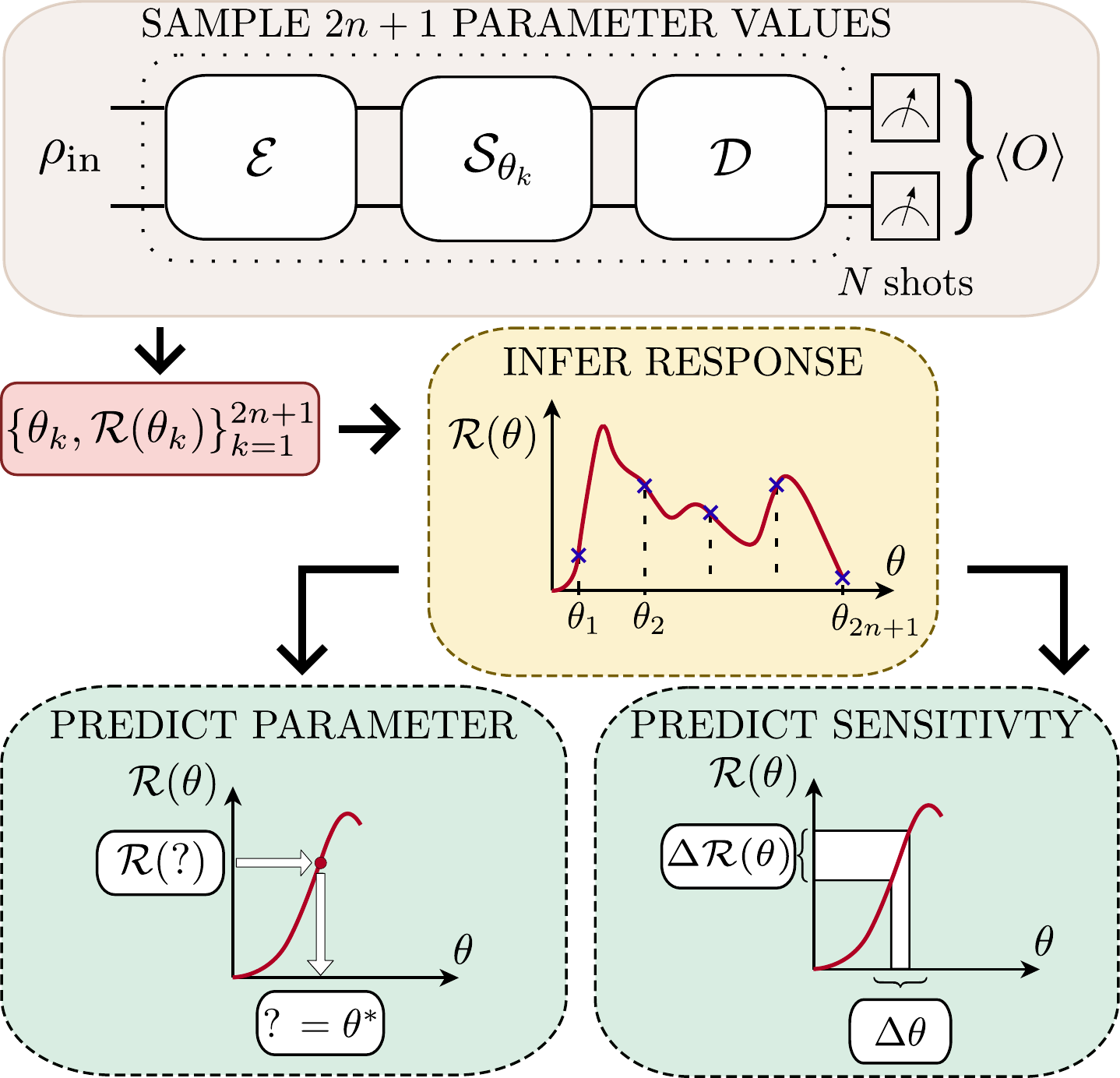}
	\caption{\textbf{Inference-based QS scheme.} An input state $\rho_{\tin}$ is sent through the following channels: state preparation $\EC$, parameter encoding $\SC_\theta$, pre-measurement $\DC$. We then measure the expectation value of  $O$. By measuring the system response at $2n+1$ parameters, we can recover the exact form of $\RC(\theta)$ in Eq.~\eqref{eq:trig_response}. From  $\RC(\theta)$ we can  infer the value of a parameter given the system response, and   compute the sensitivity of the sensing scheme.  }
	\label{fig:1}
\end{figure}

\begin{figure*}[t!]
    \centering
    \includegraphics[width = 2\columnwidth]{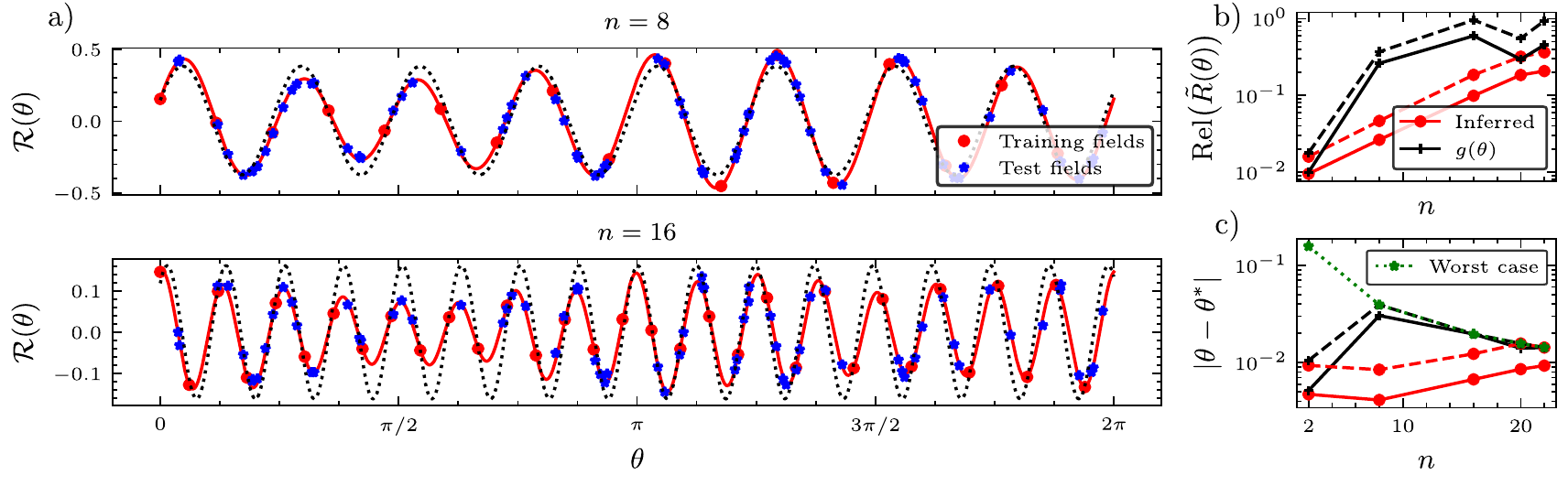}
    \caption{\textbf{Magnetometry task on IBM hardware}. a) Inferred response $\widetilde{\RC}(\theta)$ for system sizes of $n=8$ and $16$ qubits. The fields used to train (red point) and test (blue star) the inference scheme, were estimated on the \textit{IBM\_Montreal} quantum computer using $3.2 \times 10^{4}$ shots per expectation value. We depict the inferred response $\widetilde{\RC}(\theta)$ (red solid curve) as well as the  fit $g(\theta)= \alpha\cos(\beta \theta+ \gamma)+\zeta$ (black dotted curve).
    b) Relative response error versus $n$. Statistics were obtained over $74$ test fields and $7$ experiment repetitions. The relative error is defined as the difference between the fit or inferred value and the measurement response, normalized by the average test expectation value. The  red (black) points  correspond to  $\widetilde{\RC}(\theta)$ ($g(\theta$)), while solid (dashed) lines represent the median (upper quartile) error. c) Parameter prediction error  versus $n$, with green dots denoting the worst possible prediction (see SM).  }
    \label{fig:real_device_example_error_scaling}
\end{figure*}

Here we note that $A$ can be singular (for instance if $\theta_k=\theta_{k'}+2\pi$ for any two $\theta_k,\theta_{k'}\in P$), and hence care must be taken when determining the $2n+1$ parameters. As shown in the SM, the optimal strategy is to uniformly sample the parameters  as
\begin{equation}\label{eq:sampling}
   \theta_k=\frac{2 \pi (k-1)}{2n+1}\,,\quad \text{with $k=1,\ldots,2n+1$}\,,
\end{equation}
since this choice reduces the matrix inversion error. 

In practice one cannot exactly evaluate the responses $\RC(\theta_k)$, but rather can only estimate them up to some statistical uncertainty resulting from finite sampling. We define  $\overline{\RC}(\theta_k)$ as the $N$-shot estimate of $\RC(\theta_k)$, and   $\widetilde{\RC}(\theta)$ as the inferred response, a trigonometric polynomial of the form in Eq.~\eqref{eq:trig_response}, obtained by solving the linear system of equations with the estimates $\overline{\RC}(\theta_k)$. 
The effect of the estimation errors on the accuracy of the inferred response can be rigorously quantified as follows.
\begin{theorem} \label{theo:bound1}
Let $\RC(\theta)$ be the exact response function, and let $\widetilde{\RC}(\theta)$ be its approximation obtained from the $N$-shot estimates $\overline{\RC}(\theta_k)$ with $\theta_k$ given by Eq.~\eqref{eq:sampling}.  Defining the maximum estimation error $\varepsilon=\max_{\theta_k\in P}|\RC(\theta_k)-\overline{\RC}(\theta_k)|$, then we have that for all $\theta$
\begin{equation}
    \vert \RC(\theta)-\widetilde{\mathcal{R}}(\theta) \vert \in\OC\left(\varepsilon\log(n)\right)\,.
\end{equation}
\end{theorem}

Since the maximum estimation error $\varepsilon$ is fundamentally related to the number of shots $N$, we can derive the following corollary.

\begin{corollary}\label{cor:error}
The number of shots $N$, necessary to ensure that with a (constant) high probability, and for all $\theta$, the error $\vert \RC(\theta)-\widetilde{\mathcal{R}}(\theta) \vert\leq \delta$, for an inference  error $\delta$, is in $\Omega\left(\log^3(n)/\delta^2\right)$.
\end{corollary}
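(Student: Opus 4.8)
\emph{Proof strategy for Corollary~\ref{cor:error}.} The plan is to combine the deterministic error-propagation bound of Theorem~\ref{theo:bound1} with a standard concentration argument for the finite-sampling estimates $\overline{\RC}(\theta_k)$. By Theorem~\ref{theo:bound1} there is a universal constant $C$ such that $|\RC(\theta)-\widetilde{\RC}(\theta)|\leq C\,\varepsilon\log(n)$ for all $\theta$, where $\varepsilon=\max_{\theta_k\in P}|\RC(\theta_k)-\overline{\RC}(\theta_k)|$ and $P$ is the set in Eq.~\eqref{eq:sampling}. Hence it suffices to guarantee, with the desired probability, that $\varepsilon\leq \delta/(C\log n)$; the target inequality $|\RC(\theta)-\widetilde{\RC}(\theta)|\leq\delta$ then follows for every $\theta$ simultaneously.

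Next I would bound the probability that $\varepsilon$ exceeds this threshold. For each fixed $\theta_k$, the $N$-shot estimate $\overline{\RC}(\theta_k)$ is the empirical mean of $N$ i.i.d.\ measurement outcomes of $O$, each taking values in $[-1,1]$ since $\norm{O}_\infty\leq 1$, with expectation $\RC(\theta_k)$. Hoeffding's inequality then gives $\Pro[|\RC(\theta_k)-\overline{\RC}(\theta_k)|>t]\leq 2e^{-Nt^2/2}$, and a union bound over the $2n+1$ parameters in $P$ yields $\Pro[\varepsilon>t]\leq 2(2n+1)\,e^{-Nt^2/2}$.

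Setting $t=\delta/(C\log n)$ and requiring that the right-hand side be at most a fixed constant $p\in(0,1)$ gives the condition $N\geq \frac{2C^2\log^2(n)}{\delta^2}\,\log\!\big(\tfrac{2(2n+1)}{p}\big)$. Since $\log(2(2n+1)/p)\in\OC(\log n)$ for fixed $p$, this shows that taking $N$ in $\Omega(\log^3(n)/\delta^2)$ is enough to ensure $|\RC(\theta)-\widetilde{\RC}(\theta)|\leq\delta$ for all $\theta$ with probability at least $1-p$, which is the claim.

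I do not expect a genuine obstacle here; the only points requiring mild care are (i) tracking the implicit constant $C$ of Theorem~\ref{theo:bound1} so that the threshold on $\varepsilon$ is explicit, and (ii) justifying that each measurement shot is a bounded i.i.d.\ random variable with mean $\RC(\theta_k)$ — immediate for a projective measurement of an observable with $\norm{O}_\infty\leq 1$, and extending to more elaborate estimators of $\RC(\theta_k)$ at the cost of a different but still $n$-independent sub-Gaussian parameter, which leaves the $\log^3(n)/\delta^2$ scaling unchanged.
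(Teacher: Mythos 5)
Your proposal is correct and follows essentially the same route as the paper's proof: Hoeffding's inequality for each $N$-shot estimate $\overline{\RC}(\theta_k)$, a union (Boole) bound over the $2n+1$ interpolation nodes, and then Theorem~\ref{theo:bound1} (with explicit constant $5$ in the paper, your abstract $C$) to convert the estimation-error threshold $\varepsilon\leq\delta/(C\log n)$ into the stated $\Omega(\log^3(n)/\delta^2)$ shot count. No substantive differences to report.
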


It follows from Corollary~\ref{cor:error} that, for fixed $\delta$, a poly-logarithmic number of shots $N\in\Omega(\log^3(n))$ suffices to guarantee that $\widetilde{\mathcal{R}}(\theta)$ will be a good approximation for the true response $\RC(\theta)$. Once the inferred response is obtained, it can be further employed for tasks of parameter estimation and to characterize the sensitivity of a sensing apparatus -- two aspects of central importance when devising a QS protocol (see Fig.~\ref{fig:1}). 

When inferring the value of an unknown parameter $\theta'$, we assume that one is given an estimate of the system response $\overline{\RC}(\theta')$, and the promise that $\theta'$ is sampled from  a known domain~$\Theta$. In such a case, one estimates the unknown parameter  as $\theta^*=\argmin_{\theta\in\Theta} |\widetilde{\RC}(\theta)-\overline{\RC}(\theta')|$. In many cases of interest, such as high-precision estimation of small magnetic fields, $\Theta$ will be small enough such that $\widetilde{\RC}(\theta)$ is bijective, ensuring that the solution  $\theta^*$ is unique. The resulting error in the estimate of the parameter $\theta'$ can be analyzed via the following corollary.
\begin{corollary}\label{cor:error_theta}
Let $\epsilon'$ be the estimation error in $\overline{\RC}(\theta')$ for some $\theta'$ in a known domain $\Theta$ where the  system response is bijective. Let $\chi$ be the  error introduced when estimating $\theta'$ via $\widetilde{\RC}(\theta)$ relative to the case when the exact response $\RC(\theta)$ is used. 
The number of shots, $N$, necessary to ensure that with a (constant) high probability $\chi$ is no greater than $\delta'$  is  $\Omega(\log^{3}(n)/(\delta' +\varepsilon')^2)$.
\end{corollary}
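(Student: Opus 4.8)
The plan is to transfer the sup-norm control of the inferred response supplied by Theorem~\ref{theo:bound1} and Corollary~\ref{cor:error} from response space to parameter space, for which the essential ingredient is Lipschitz-invertibility of $\RC$ on the bijectivity domain. Since $\RC(\theta)$ is a degree-$n$ trigonometric polynomial (Theorem~\ref{theo:trig-pol}) it is smooth, and Bernstein's inequality bounds $\|\RC'\|_\infty \le n\|\RC\|_\infty \le n$; more to the point, on the compact closure of any $\Theta$ on which $\RC$ is a bijection the continuous function $|\RC'|$ attains a strictly positive minimum $\mu = \mu(\Theta) > 0$, so $\RC^{-1}$ is $L$-Lipschitz with $L = 1/\mu$. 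Throughout I treat $\mu$ (equivalently $L$) as a problem-dependent constant with no $n$-dependence, since $\Theta$ is fixed and, in the regime of interest, small.

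First I would record that the inference-based estimate stays close to the exact-response estimate. Set $\eta := \|\RC - \widetilde\RC\|_\infty$, so $\eta \in \OC(\varepsilon\log n)$ by Theorem~\ref{theo:bound1}; for $N$ past an ($n$-independent) threshold determined by $\mu$ and the geometry of $\Theta$, $\eta$ is small enough that $\widetilde\RC$ is itself a bijection on $\Theta$ with an inverse whose Lipschitz constant is within a constant factor of $L$. Let $\theta^* = \argmin_{\theta\in\Theta}|\widetilde\RC(\theta) - \overline\RC(\theta')|$ be the inference-based estimate and let $\theta^*_\RC$ be its exact-response analogue. A short triangle-inequality argument — using that each $\argmin$ can only do better than the feasible point $\theta'$, that $|\overline\RC(\theta') - \RC(\theta')| \le \epsilon'$, and that $\|\RC - \widetilde\RC\|_\infty \le \eta$ — gives $|\RC(\theta^*) - \RC(\theta^*_\RC)| \le 2(\eta + \epsilon')$, and feeding this through the Lipschitz inverse yields $\chi \le 2L(\eta + \epsilon')$. (When shot noise pushes $\overline\RC(\theta')$ outside the range $\RC(\Theta)$, the relevant $\argmin$ sits on $\partial\Theta$; one checks this clipping does not worsen the bound.) Thus the parameter-estimation discrepancy is governed by the response-inference error $\eta$ together with the test-measurement error $\epsilon'$.

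Finally I would convert to a shot count exactly as in the proof of Corollary~\ref{cor:error}: Hoeffding's inequality with a union bound over the $2n+1$ training parameters gives, with constant probability, $\varepsilon \in \OC(\sqrt{\log(n)/N})$, hence $\eta \in \OC(\log^{3/2}(n)/\sqrt N)$. To guarantee $\chi \le \delta'$ it then suffices to drive $\eta$ down to order $\delta' + \epsilon'$ — there is no gain in making the inferred response more accurate than the irreducible test-noise floor $\epsilon'$ — and solving $\log^{3/2}(n)/\sqrt N \lesssim \delta' + \epsilon'$ for $N$ produces the claimed $N \in \Omega(\log^3(n)/(\delta'+\epsilon')^2)$.

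The main obstacle is the very first step: turning the qualitative hypothesis that $\RC$ is bijective on $\Theta$ into the quantitative bound $\mu = \min_{\overline\Theta}|\RC'| > 0$, and in particular arguing that $\mu$ may be taken independent of $n$ (or else tracking its scaling). Without a uniform lower bound on $|\RC'|$ the inverse need not be Lipschitz and $\chi$ cannot be controlled by $\eta$ at all; the complementary facts — that the $\OC(\varepsilon\log n)$ perturbation preserves bijectivity on $\Theta$ once $N$ exceeds an $n$-independent threshold, and that boundary clipping is harmless — are comparatively routine once $\mu$ is in hand.
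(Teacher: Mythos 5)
Your argument is correct but follows a genuinely different route from the paper's. The paper never leaves response space: its proof interprets the ``error introduced'' as the excess response-level uncertainty $5\varepsilon\log(n)$ coming from Theorem~\ref{theo:bound1} (illustrated by a geometric worst-case-curve argument, $\Delta\widetilde{\RC}(\theta)\leq\Delta\RC(\theta)+5\varepsilon\log(n)$), demands that $5\varepsilon\log(n)-\varepsilon'\geq\delta'$ occur with low probability, and then applies the same Hoeffding-plus-union-bound machinery as in Corollary~\ref{cor:error} with threshold $\eta=(\delta'+\varepsilon')/(5\log n)$ to get $N\geq 50\log^2(n)\log[(4n+2)/b]/(\delta'+\varepsilon')^2$. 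You instead bound the actual parameter-space discrepancy $\chi=|\theta^*-\theta^*_{\RC}|$ via an argmin-stability triangle inequality (your $|\RC(\theta^*)-\RC(\theta^*_{\RC})|\leq 2(\eta+\epsilon')$ checks out) followed by Lipschitz inversion with constant $L=1/\mu$, $\mu=\min_{\overline{\Theta}}|\RC'|$. Your version is more faithful to the literal statement of $\chi$ as an error in the \emph{estimated parameter}, and it makes explicit a hypothesis the paper leaves implicit: without a lower bound on $|\RC'|$ over $\Theta$ the stated shot count cannot control a parameter-space error at all (in the paper this slope dependence only surfaces in the separate sensitivity result, Theorem~\ref{th:error-sens} and Corollary~\ref{app:col3}, where the $1/D_l$ factor appears). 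The price is that your final bound matches the paper's only after absorbing $L$ into the constants, which requires the $n$-independence of $\mu$ that you correctly flag as the real obstacle; the paper's response-space formulation sidesteps this entirely, at the cost of proving a slightly weaker (response-level) statement than the corollary's wording suggests. One small point: from $\chi\leq 2L(\eta+\epsilon')$ alone, driving $\eta$ to order $\delta'+\epsilon'$ does not literally force $\chi\leq\delta'$ when $L\epsilon'$ is large; the cleaner observation is that when $\overline{\RC}(\theta')$ lies in $\RC(\Theta)$ one in fact gets $|\RC(\theta^*)-\RC(\theta^*_{\RC})|\leq 2\eta$, so the relative error $\chi\leq 2L\eta$ is controlled by the training-point shot budget alone, consistent with the paper's treatment of $\varepsilon'$ as an additive offset in the threshold rather than a term to be beaten.
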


Corollary~\ref{cor:error_theta} certifies that $\widetilde{\mathcal{R}}(\theta)$ can be used to infer an unknown parameter from a measured system response without incurring additional uncertainties as long as enough shots are used. In fact, for fixed $\delta'$ and $\varepsilon'$, one only needs a poly-logarithmic number of shots.

Our inference-based method also allows for estimating  the sensitivity of QS schemes. Knowing the functional form of the response enables one to directly compute the sensitivity using the error propagation formula $(\Delta\theta)^2=(\Delta\mathcal{R}(\theta))^2 /|\partial_{\theta} \RC(\theta)|^2$~\cite{pezze2018quantum,sidhu2020geometric} which relates the variance $(\Delta \theta)^2$ in estimates of the parameter $\theta$ to the variance $(\Delta\RC(\theta))^2$ of the observable $O$ used to estimate $\theta$ (i.e., $(\Delta\RC(\theta))^2=\Tr[\DC\circ\SC_{\theta}\circ\EC(\rho_{\text{in}})O^2] - \Tr[\DC\circ\SC_{\theta}\circ\EC(\rho_{\text{in}})O]^2$) and to the slope $\partial_{\theta} \RC(\theta)$ of the response with respect to $\theta$. When $O^2=\id$ (i.e., measuring a Pauli operator), the sensitivity is
\begin{align}\label{eq:sensitivity}
    (\Delta\theta)^2=\frac{1-\left(\sum_{s=1}^n \left[a_{s} \cos(s\theta) + b_{s} \sin(s\theta)\right] + c\right)^2}{\vert \sum_{s=1}^n s\left( -a_{s} \sin(s\theta) + b_{s}  \cos(s\theta)\right) \vert^{2}}\,.
\end{align}
A similar expression will hold when using $\widetilde{\RC}(\theta)$ in place of $\RC(\theta)$. As shown in the SM, using $\widetilde{\RC}(\theta)$ to estimate the sensitivity at a parameter $\theta_l$ leads to an error which scales as $\OC(\varepsilon\log(n)/D_l)$, where $D_l=\partial_{\theta} \RC(\theta)\vert_{\theta=\theta_l}$. Moreover, as proved in the SM, a polynomial number of shots suffices to guarantee $|\Delta\theta-\Delta\widetilde{\theta}|\leq \delta ''$ for some fixed $\delta''$ if $D_l\in\Omega(1/\poly(n))$. Notably, the inferred sensitivity $(\Delta \theta)^2$ in Eq.~\eqref{eq:sensitivity}
can be compared with the quantum Cramer Rao-Bounds (CRBs)~\cite{hayashi2016quantum,liu2016quantum}, or the ultimate Heisenberg limit,  to determine the optimality of the sensing scheme.  In the SM we use this insight to show how our inferred response function can be used to train a measurement operator to reach the optimal sensing scheme given a fixed probe state. 


One can further ask whether Eq.~\eqref{eq:trig_response} can still be used in scenarios where the system response is no longer a  trigonometric polynomial.  Such a case will arise, for instance, if $\SC_\theta$ is not of the form in Eq.~\eqref{eq:unitary_family}. Still, we can leverage tools from trigonometric interpolation to accurately approximate the system response. Here, the following theorem holds for periodic responses and for parameters  close enough to the $\theta_k$ values in $P$   (regions of great interest for several QS tasks such as small magnetic field estimation).
\begin{theorem}\label{theo:bound2}
Let $f(\theta)$ be a $2\pi$-periodic function with $|f(\theta)|\leq 1$ $\forall \theta$, and let  $\widetilde{\RC}(\theta)$ be its trigonometric polynomial  approximation obtained from the $N$-shot estimates of $\overline{f}(\theta_k)$, with $\theta_k$ given by Eq.~\eqref{eq:sampling}. Defining the maximum estimation error $\varepsilon'=\max_{\theta_k\in P}|f(\theta_k)-\overline{f}(\theta_k)|$,  and assuming that  $|\theta-\theta_k| \in\OC(1/\poly(n))$, then
\begin{eqnarray}
    \vert f(\theta)-\widetilde{\mathcal{R}}(\theta) \vert \in\OC\left(\max\left\{\frac{M}{\poly(n)},\varepsilon'\log(n)\right\}\right)\, ,
\end{eqnarray}
where $M$ is the Lipschitz constant of $f(\theta)$.
\end{theorem}

Theorem~\ref{theo:bound2} shows that if $M\in\OC(n)$, which can occur for a wide range of parameter encoding schemes~\cite{sweke2020stochastic}, then, we can derive a result similar to that  in Corollary~\ref{cor:error}. Namely,  using a poly-logarithmic number of shots to estimate the quantities $\overline{f}(\theta_k)$  leads to $\widetilde{\mathcal{R}}(\theta)$ being a good approximation of  $f(\theta)$.

\medskip

\textit{Experimental results.} We demonstrate the performance of the inference method for a magnetometry task performed on the \textit{IBM\_Montreal} quantum computer. This consists in preparing the GHZ state, encoding a magnetic field  via Eq.~\eqref{eq:unitary_family} with $H =\sum_{j=1}^n Z_{j}$,  and measuring the parity operator $O=\bigotimes_{j=1}^nX_j$. Here, $Z_{j}$ and $X_j$ are  the Pauli $z$ and $x$ operators acting on the $j$-th qubit, respectively. We set $\DC$ to be  the identity channel and perform the QS task for systems of up to  $n=22$ qubits.

We first measure the system response at $2n+1$ training fields $\theta_k\in P$, sampled according to Eq.~\eqref{eq:sampling}. These estimates are then used to infer the response $\widetilde{\RC}(\theta)$ of Eq.~\eqref{eq:trig_response}, as well as to fit a function $g(\theta)=\alpha\cos(\beta \theta+ \gamma)+\zeta$. 
As discussed in the SM, the latter corresponds to a first order approximation of a noisy response under where the coefficients $\alpha$, $\beta$, $\gamma$ and $\zeta$, correct the cosine to account for the effects of hardware noise. To evaluate the ability of these two functions to recover the true response of the system, we compare their predictions against the measured system response at a set of random test fields.

In Fig.~\ref{fig:real_device_example_error_scaling}(a) 
we display inference results for $n=8$ and $n=16$ qubits, 
indicating that our method (red solid curve) is clearly able to fit the training and test fields better than the cosine response (black dotted curve). 
More quantitatively, in Fig.~\ref{fig:real_device_example_error_scaling}(b), we show the scaling of the error as a function of the system size. One can see that for all problem sizes considered  our method leads to smaller response prediction error. We note that for larger $n$ the effect of noise becomes more prominent, as the hardware noise suppresses the measured expectation values~\cite{wang2020noise,franca2020limitations,wang2021can}. Hence, in this regime both methods are equally limited by finite sampling noise which becomes of the same order as the magnitude of the response. Still, even for system sizes as large as $n=22$ qubits, the inference method reduces the relative error by a factor larger than two when compared to that of the $g(\theta)$ fit. Finally, we also use $\widetilde{\RC}(\theta)$ and $g(\theta)$ for parameter estimation, i.e., to determine an unknown  magnetic field encoded in the quantum state.  As shown in Fig.~\ref{fig:real_device_example_error_scaling}(c), the $g(\theta)$ fit matches the worst possible prediction already for $n=8$ qubits, whereas  our inference method can outperform the $g(\theta)$ fit by up to one order of magnitude.  In the SM we further discuss the behaviour of the parameter prediction curves of Fig.~\ref{fig:real_device_example_error_scaling}(c).

\begin{figure}[t]
    \includegraphics[width = \columnwidth]{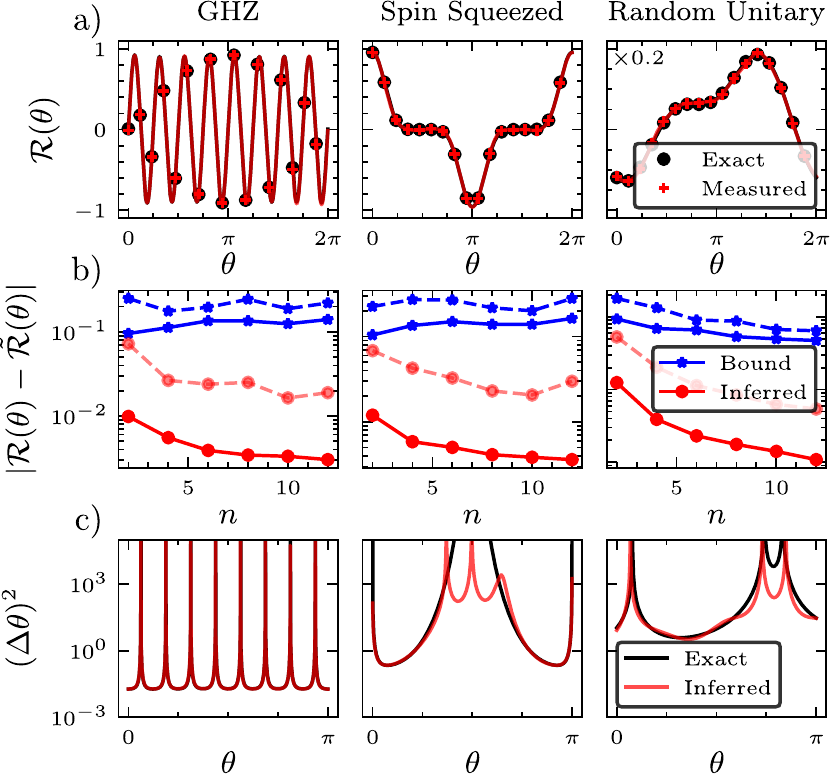}
    \caption{\textbf{Numerical results for QS tasks on a simulated noisy trapped ion device.} a) System response versus $\theta$ for $n=8$ qubits in  all three QS setups described in the main text. The exact response $\RC(\theta)$ (black curve), and its value at the training fields $\RC(\theta_k)$ (black points), were obtained with no finite sampling. In contrast, the response estimated at the training fields $\overline{\RC}(\theta_k)$ (red crosses), and the resulting inferred function $\widetilde{\RC}(\theta)$ (red curve), were obtained with a poly-logarithmic number of shots. b) Median (solid) and maximum (dashed) of the error $\vert \RC(\theta)-\widetilde{\RC}(\theta)\vert$ (red) and the bound of Theorem~\ref{theo:bound1} (blue) for $10^4$ test fields uniformly sampled over $(0,2\pi)$. The statistics were obtained over $30$ repetitions of the experimental setups. c) Curves depict the inferred sensitivity versus $\theta$. The exact (inferred) sensitivities are shown as a black (red) curves.}
    \label{fig:estimating_the_response}
\end{figure}

\medskip

\textit{Numerical simulations.} We complement the previous study with numerical results from a density matrix simulator that includes hardware noise, but where finite sampling can be omitted. We evaluate our inference method by  emulating several magnetometry tasks as they would have been performed on a trapped-ion quantum computer (see~\cite{cincio2018learning, trout2018simulating}). To this end, we consider three different sensing setups. First, we study the same standard GHZ magnetometry setting as implemented on the IBM device. Second, we characterize the squeezing in a system where the probe state is a spin coherent state,  $H =\sum_{j<k} X_{j} X_{k}$ is the  one-axis twisting Hamiltonian \cite{kitagawa1993squeezed}, and $O=Z_n$ (note that we did not chose the optimal measurement operator $O=\sum_j Z_j$ as we want to showcase that we can infer the response for any choice of $O$). Finally, we study a scenario where the probe state is constructed by a unitary composed of $4$ layers of a hardware efficient ansatz with random parameters~\cite{kandala2017hardware,cerezo2020cost},  $H =\sum_{j=1}^{n-1} Z_{j}Z_{j+1}$ and $O=\frac{1}{n}\sum_{i=1}^n X_i$. (This case is relevant for variational quantum metrology~\cite{cerezo2020variationalreview,meyer2020variational, koczor2020variational,beckey2020variational,kaubruegger2021quantum}, where one wishes to prepare a probe state via some parameterized quantum circuit that is usually initialized with random parameters.)  In all cases $\DC$ is the identity channel. See SM for further details, including the circuits employed.

As motivated by Corollary~\ref{cor:error}, $\widetilde{\RC}(\theta)$ is inferred with  $N=\lceil 5 \times 10^{2}\log(n)^{2}\log(2 \times 10^{2} (2n+1)) \rceil$ shots per $\theta_k$. Figure~\ref{fig:estimating_the_response}(a) shows that in all three QS settings considered the inferred response closely matches the exact one (i.e, the red curve for $\widetilde{\RC}(\theta)$ and the black curve for $\RC(\theta)$ are  overlaid). In  Fig.~\ref{fig:estimating_the_response}(b) we further show the scaling of the error $\vert \RC(\theta)-\widetilde{\RC}(\theta)\vert$  with respect to  the system size. This analysis reveals that our method always performs significantly better than the  upper bound given by Theorem~\ref{theo:bound1}. Indeed, we can  see that allocating a number of shots $N$ that increases poly-logarithmically  with $n$ allows the error  to decrease with increasing system size.  

Finally,  we use $\widetilde{\RC}(\theta)$ to  estimate the sensitivity  of  the three experimental setups. As shown in  Fig.~\ref{fig:estimating_the_response}(c), our method (red curves) recovers the  behavior of the exact sensitivity (black curves). The sensitivity diverges in parameter regions where  the experimental setup is insensitive  to the field (when the response function has a vanishing gradient). In the SM we further provide a theoretical and numerical analysis  for the estimated sensitivity,  as well as the scaling of the error of inferring an unknown parameter.

\textit{Conclusions.} We introduced a  inference-based scheme for QS which fully characterizes the response $\mathcal{R}(\theta)$ for a general class of unitary families by only measuring the system  at $2n+1$ known parameters.
This framework leverages techniques from quantum machine learning and polynomial interpolation~\cite{nakanishi2020sequential, dimatteo2022quantum,wierichs2022general} for quantum sensing, leading to new insights and methodology for the characterization, implementation and benchmarking of sensing protocols.

One of the main advantages of our  method is that it can be readily combined with existing sensing protocols. For instance, further research could explore the use of the inferred response function in a variational setting, involving an optimization of the experimental setup to maximize the sensitivity and parameter prediction accuracy (see SM). This paves the way for a new approach in data-driven quantum machine learning for QS where the optimization procedure does not require knowledge of the classical or quantum Fisher information \cite{koczor2020variational, beckey2020variational, cerezo2021sub, sone2020generalized,kaubruegger2021quantum,meyer2020variational,meyer2021fisher,cong2019quantum,pesah2020absence,sharma2020trainability,marshall2020characterizing,xue2021effects}.

\section*{Acknowledgments}

We thank Zoe Holmes, Michael Martin and Michael McKerns for helpful and insightful discussions. CHA and MC acknowledge support by NSEC Quantum Sensing at Los Alamos National Laboratory (LANL). MHG was supported by the U.S.
Department of Energy (DOE), Office of Science, Office of Advanced Scientific Computing Research, under the Quantum Computing Application Teams (QCAT) program.  AS was supported by the internal R\&D from Aliro Technologies, Inc. ATS, PJC, and MC were initially supported by the LANL ASC Beyond Moore's Law project.  MC was supported by the LDRD program of LANL under project number 20210116DR.  This work was also supported by the Quantum Science Center (QSC), a National Quantum Information Science Research Center of the U.S. Department of Energy (DOE). This research used quantum computing resources provided by the LANL Institutional Computing Program, which is supported by the U.S. DOE National Nuclear Security Administration under Contract No. 89233218CNA000001.

\bibliography{quantum.bib}

\clearpage
\newpage
\pagebreak[1]
\onecolumngrid
\section*{Supplemental Material for ``Inference-Based Quantum Sensing''}

\setcounter{theorem}{0}
\setcounter{corollary}{0}


In this Supplemental Material, we provide additional details for the manuscript ``\textit{Inference-Based Quantum Sensing}''. First, in section~\ref{app:Framework} we summarize the core idea behind our Inference method. Then, in Section~\ref{app:theo_response}, we present a proof for Theorem \ref{theo:trig-pol}. In Section~\ref{app:LSP}, we discuss the linear system problem that one needs to solve to determine the coefficients of the inferred system response. Therein we derive an optimal parameter sampling strategy that minimizes errors due to matrix inversion. We then present proofs of Theorem \ref{theo:bound1}, Corollary \ref{app:col1}, Corollary \ref{app:col2} and Theorem \ref{theo:bound2}, in Sections \ref{app:theo_bound1}, \ref{app:cor_error}, \ref{app:cor_error_theta} and \ref{app:theo_bound2} respectively. In Section~\ref{app:error_sensitivity} we derive and probe numerically Theorem \ref{th:error-sens} and Corollary \ref{app:col3} for the error in the sensitivity. This is followed by a first-order noise analysis for a GHZ magnetometry task (see Section \ref{app:first-order_noise}), numerical results further exploring the scaling of the errors arising in tasks of parameter estimation (see Section \ref{app:param_estimate}), and an application of our inference method in the context of variational quantum metrology (see Section \ref{app:training}). Finally, we present  examples of the circuits used for the experiments and numerical simulations performed (see Section~\ref{app:circuits}).

\section{Framework}\label{app:Framework}
We recall from the main text that in this work we consider a single-parameter QS setting employing an $n$-qubit probe state $\rho$ to estimate a parameter $\theta$. Here, $\rho$ is prepared by sending a fiduciary state $\rho_{\tin}$ through a state preparation channel $\EC$ such  that $\EC(\rho_\tin)=\rho$. Moreover, we first focus on the case of unitary families
\begin{equation}\label{eq:unitary_family}
   \mathcal{S}_\theta(\rho) = e^{-i \theta H/2}\rho ~e^{i \theta H/2}= \rho_{\theta}\,.
\end{equation}
Here, $H$ is a Hermitian operator such that $H=\sum_j h_j$ with $h_j^2=\id$, and $[h_j,h_{j'}]=0$, $\forall j, j'$. In addition, we allow for the possibility of sending $\rho_\theta$ through a second pre-measurement channel $\DC$ which outputs an $m$-qubit state $\DC(\rho_\theta)$ (with $m\leq n$), over which we measure the expectation value of an observable $O$, with $\norm{O}_\infty\leq 1$. The system response is thus defined as 
\begin{align}\label{eq:response}
    \RC(\theta)&=\Tr[\DC\circ\SC_{\theta}\circ\EC(\rho_\tin)O]\,.
\end{align}

the optimal strategy is to uniformly sample the parameters  as
\begin{equation}\label{eq:sampling}
   \theta_k=\frac{2 \pi (k-1)}{2n+1}\,,\quad \text{with $k=1,\ldots,2n+1$}\,,
\end{equation}

\section{Proof of Theorem \ref{theo:trig-pol}}\label{app:theo_response}

Here we present a proof for Theorem~\ref{theo:trig-pol}. This leverages the tools for optimizing quantum machine learning architectures presented in  Ref.~\cite{nakanishi2020sequential}.
\medskip
\begin{theorem}\label{theo:trig-pol}
Let $\RC(\theta) $ be the response function in Eq.~\eqref{eq:response} for a unitary family as in Eq.~\eqref{eq:unitary_family}. Then, for any $\EC$, $\DC$ and measurement operator $O$, $\RC(\theta) $ can be exactly expressed as a trigonometric polynomial of degree $n$. That is, 
\begin{align}\label{eq:trig_response}
    \RC(\theta) &= \sum_{s=1}^n \left[a_{s} \cos(s\theta) + b_{s} \sin(s\theta)\right] + c\,,
\end{align} 
with $\{a_s,b_s\}_{s=1}^n$ and $c$ being real valued coefficients.
\end{theorem}
\medskip
\begin{proof} 
We aim at deriving a closed-form expression for the true system response $\RC(\theta)$ for a unitary family $\SC_{\theta}$. 
Recall that, for an operator $h_j$ satisfying $h_j^2=\id$, we have
\begin{equation}
    e^{-i \theta h_j/2}=\cos\left(\frac{\theta}{2}\right)\id_j-i \sin\left(\frac{\theta}{2}\right) h_j\,.
\end{equation}
Hence, the state obtained after encoding by the unitary channel defined in Eq.~\eqref{eq:unitary_family} can be rewritten as
\begin{align}\label{eq:exact_state_expanded}
    \rho_{\theta} = \sum_{\vec{l},\vec{l}'} c_{\vec{l},\vec{l}'}(\theta) \left((-i)^{\norm{\vec{l}}^2} (i)^{\norm{\vec{l}'}^2} \vec{h}^{\vec{l}} \rho \vec{h}^{\vec{l}'}\right)\,,
\end{align}
with
\begin{align} \label{eq:coeff_cll} \nonumber
    c_{\vec{l},\vec{l}'}(\theta)&=\cos\left(\frac{\theta}{2}\right)^{2n-\norm{\vec{l}}^2-\norm{\vec{l}'}^2}\sin\left(\frac{\theta}{2}\right)^{\norm{\vec{l}}^2+\norm{\vec{l}'}^2} \\ 
        &= \sum_{s=1}^n \overline{a}_{\vec{l},\vec{l'},s}\cos(s\theta) + \sum_{s=1}^n \overline{b}_{\vec{l},\vec{l'},s}\sin(s\theta) + \overline{c}\,,
\end{align}
where, $\vec{l}=(l_1,l_2,\ldots,l_n)$, with $l_j\in\{0,1\}$, is a bitstring of length $n$, and where $\vec{h}^{\vec{l}}$ is a Hermitian operator defined as
\begin{equation}
    \vec{h}^{\vec{l}}:=h_1^{l_1}\otimes h_2^{l_2}\otimes \cdots\otimes h_n^{l_n}\,.
\end{equation}
Here, we have used the notation $h_j^0=\id_j$ and recall that  $\norm{\vec{l}},\norm{\vec{l}'}\in\{0,1,\ldots,n\}$. It can be further verified that the coefficients $c_{\vec{l},\vec{l}'}(\theta)$ are real-valued and symmetric with respect to swapping $\vec{l}$ and $\vec{l}'$.

Finally, we can recast the system response $\RC(\theta) = \Tr[\DC \circ \SC_{\theta} \circ \EC(\rho_\tin) O] $ as the trigonometric polynomial 
\begin{align}\label{eq:SM-trig}
    \mathcal{R}(\theta) &=\sum_{s=1}^n \left[ a_{s} \cos(s\theta) + b_{s} \sin(s\theta) \right] + c ~,
\end{align}
where we have defined the coefficients
\begin{align} \label{eq:coefficients} \nonumber
    a_s &= \sum_{\vec{l},\vec{l}'} \overline{a}_{\vec{l},\vec{l'},s} d_{\vec{l},\vec{l}'}\,,\\ \nonumber
    b_{s} &= \sum_{\vec{l},\vec{l}'} \overline{b}_{\vec{l},\vec{l'},s} d_{\vec{l},\vec{l}'}\,,\\ \nonumber
    c &= \overline{c} \sum_{\vec{l},\vec{l}'}d_{\vec{l},\vec{l}'} \,. \\
    d_{\vec{l},\vec{l}'} &= \Tr\left[ \DC\circ\SC_{\theta} \left((-i)^{\norm{\vec{l}}^2} (i)^{\norm{\vec{l}'}^2} \left(\vec{h}^{\vec{l}} \rho \vec{h}^{\vec{l}'}\right) \right) O \right]~.
\end{align}
\end{proof}

One of the striking implications of this theorem is that it holds even in the presence of quantum noise.
If we assume that the quantum hardware employed is prone to noise, then such noise will affect the probe state preparation process resulting in a noisy channel $\tilde{\EC}$ acting on $\rho_\tin$. The result is a noisy state $\widetilde{\rho}=\tilde{\EC}(\rho_\tin)$. As such, we can see that the effect of a noisy state preparation channel is to replace $\rho$ by $\widetilde{\rho}$ in Eq.~\eqref{eq:exact_state_expanded}, which does not ultimately change the functional form  of the system's response  Eq.~\eqref{eq:SM-trig} (it only changes the value of the coefficients in Eq.~\eqref{eq:coefficients}). 
One can further assume that there is a $\theta$-independent noise channel acting after the $\SC_\theta$ as well as consider the case where the  pre-measurement channel is noisy.  Similarly to the previous case, this does not change the  form  of $\RC(\theta)$ in Eq.~\eqref{eq:SM-trig}. Hence, we can see that the action of the noise channels can be ultimately absorbed into the  coefficients  of Eq.~\eqref{eq:coefficients}, and thus, we can still use the tools from the main text to fully characterize the system response by  measuring $2n+1$.

\section{Linear system problem}\label{app:LSP}

\subsection{Matrix $A$ }\label{app:Matrix_A}

As outlined in the main text, $2n+1$ evaluations of the response function, in Eq. \eqref{eq:trig_response}, yield a system of $2n+1$ equations with $2n+1$ unknown variables. 
That is, we obtain a Linear System Problem (LSP) of the form $A \cdot \vec{x} = \vec{d}$, where, $\vec{x} =(a_1,\ldots a_{n}, b_1,\ldots b_{n}, c)$ is the vector of coefficients that characterize the physical process at hand, $\vec{d} = (\mathcal{R}(\theta_1), \mathcal{R}(\theta_2), \ldots \mathcal{R}(\theta_{2n+1}))$ is the vector of system responses for all the $\theta_k \in P$, and $A$ is a $(2n+1)\times (2n+1)$ matrix defined as
\begin{eqnarray}
\hspace{-0.3cm}A=\begin{pmatrix}
    \cos(\theta_1) & \ldots & \cos(n \theta_1) & \sin(\theta_1) & \ldots & \sin(n \theta_1) & 1\\
    \cos(\theta_2) & \ldots & \cos(n \theta_2) & \sin(\theta_2) & \ldots & \sin(n \theta_2) & 1\\
     & \ldots & & & \ddots & & \\
    \cos(\theta_{2n+1}) & \ldots & \cos(n \theta_{2n+1}) & \sin(\theta_{2n+1}) & \ldots & \sin(n \theta_{2n+1}) & 1
    \end{pmatrix}. \nonumber
\end{eqnarray}

The solution $\vec{x}=A^{-1}\cdot\vec{d}$ of the LSP provides a full characterization of $\RC(\theta)$. In practice, the entries of the vector $\vec{d}$ are noisy estimates with errors resulting from finite sampling. 
We denote by $\vec{d}'$ the vector of estimated system responses and by $\vec{\delta}$ the vector of errors associated, such that
\begin{eqnarray}
    \vec{d}'=\vec{d}+\vec{\delta}\,.
\end{eqnarray}
When solving the LSP, the error in the coefficients obtained is given by 
\begin{eqnarray*}
   \epsilon= \norm{\vec{x}-\vec{x}'}=\norm{A^{-1}\cdot\vec{\delta}}\,,
\end{eqnarray*}
which can be bounded as
\begin{align}
     \epsilon\leq \norm{A^{-1}}\cdot\norm{\vec{\delta}}\,.
\end{align}
Using the $2$-norm leads to $\norm{A^{-1}}_2=1/\sigma_{\min}$, where $\sigma_{\min}$ is the smallest singular value of $A$. 
Furthermore, we can bound the smallest singular value of an $M\times M$ square real-valued matrix~\cite{hong1992lower}, which for the matrix $A$ yields
\begin{align}
    \sigma_{\min}\geq\left(\frac{M-1}{M}\right)^{(M-1)/2}|\det(A)|\cdot\frac{r_{\min}(A)}{\prod_{j=1}^{2n+1}r_j(A)}\,,
\end{align}
where $r_j(A)$ is the $2$-norm of the $j$th row of $A$, and $r_{\min}(A)=\min\{r_1(A), r_2(A), \cdots, r_{2n+1}(A)\}$. Using the fact that 
\begin{align}
\left(\frac{M-1}{M}\right)^{(M-1)/2}\geq \frac{1}{\sqrt{e}}~,
\label{eq:1/e}
\end{align}
we find
\begin{align}
    \varepsilon\leq \frac{\sqrt{e}\norm{\vec{\delta}}}{r_{\min}(A)|\det(A)|}\prod_{j=1}^{2n+1}r_j(A)\,.\label{eq:error}
\end{align}
For the error in Eq.~\eqref{eq:error} to be minimized, one needs to pick the $2n+1$ field values such that $\det(A)$ is maximized. Using the fact that the matrix $A$ can be related to a Vandermonde matrix \cite{horn1991topics} via element permutation, 
\begin{align}\label{eq:det-A}
    |\det(A)|&=\frac{1}{2^n}\prod_{i<j}\left|e^{i \theta_i}-e^{i \theta_j}\right|.
\end{align}

\subsection{Optimal sampling strategy}\label{app:Sampling}
In order to reduce the error induced when solving the LSP, one needs to maximize the determinant of $A$ in Eq.~\eqref{eq:det-A} by choosing the parameters $\theta_k \in P$ accordingly.  
Taking the logarithm of both sides in  Eq.~\eqref{eq:det-A}, one can see that maximizing $|\det(A)|$ is equivalent to  maximizing the quantity:
\begin{align}
    F    =\sum_{i<j}\log(|z_i-z_j|)\,,
\end{align}
where $z_i=e^{i \theta_i}$. 
Then, our optimization problem is equivalent to the task of placing $2N+1$ points over the unit circle such that they maximize the sum of the log of their distances. 
Since there are different ways to define a distance, we can replace $F$ by a proxy quantity $\widetilde{F}=\sum_{i<j}D_{i,j}$, where $D_{i,j}$ is taken to be a faithful distance between points over a circle. For convenience, let us pick the arc-length squared, i.e., $D_{i,j}=(\theta_i-\theta_j)^2=\Delta \theta_{i,j}^2$ such that, we now need to maximize the function
\begin{align}
    \widetilde{F}=\sum_{i<j}\Delta \theta_{i,j}^2\,,
\end{align}
subject to the condition
\begin{align}
    G=\sum_{i}\Delta \theta_{i,i+1}-2\pi=0\,.
\end{align}
This can be turned into the Lagrange multiplier problem
\begin{align}
    \LC=F-\lambda G\,,
\end{align}
with partial derivatives
\begin{align}
\frac{\partial\LC}{\partial \Delta \theta_{i,j}} &=2\Delta \theta_{i,j}-\lambda\delta_{j,i+1}\label{eq:lag-mult1}\,,\\
\frac{\partial\LC}{\partial \lambda} &=\sum_{i}\Delta \theta_{i,i+1}-2\pi\,,\label{eq:lag-mult2}
\end{align}
with $\delta_{j,i+1}$ representing Kronecker's delta function. The maximum will arise when the partial derivatives are zero, leading to
\begin{equation}\label{eq:delta_theta}
    \Delta \theta_{i,i+1}=\frac{2\pi}{2n+1}\, ,
\end{equation}
resulting in values of $\theta_i$ evenly distributed between $0$ and $2\pi$. 

\section{Proof of Theorem \ref{theo:bound1}}\label{app:theo_bound1}
\begin{theorem} \label{theo:bound1}
Let $\RC(\theta)$ be the exact response function, and let $\widetilde{\RC}(\theta)$ be its approximation obtained from the $N$-shot estimates $\overline{\RC}(\theta_k)$ with $\theta_k$ given by Eq.~\eqref{eq:sampling}.  Defining the maximum estimation error $\varepsilon=\max_{\theta_k\in P}|\RC(\theta_k)-\overline{\RC}(\theta_k)|$, then we have that for all $\theta$
\begin{equation}
    \vert \RC(\theta)-\widetilde{\mathcal{R}}(\theta) \vert \in\OC\left(\varepsilon\log(n)\right)\,.
\end{equation}
\end{theorem} 
Before proving Theorem \ref{theo:bound1}, we recall the following lemma that upper bounds trigonometric interpolation errors \cite{jackson1913accuracy}.
\medskip
\begin{lemma} \label{lemma:Upper_bound_TI} 
Let $f(\theta)$ be a periodic function of $\theta$ with period $2\pi$. Then, one can approximate $f(\theta)$ by a trigonometric function of order $n$
\begin{equation*}
    S_{n}(\theta) = \sum_{s=1}^n \left[a_{s} \cos(s\theta) + b_{s} \sin(s\theta)\right] + c
\end{equation*}
where $c= \frac{a_{0}}{2}$,
\begin{eqnarray}\nonumber
    a_{s} &=& \frac{2}{2n+1} \sum_{k=1}^{2n+1} f(\theta_{k}) \cos (s \theta_{k}), \qquad \text{and} \qquad
    b_{s} = \frac{2}{2n+1} \sum_{k=1}^{2n+1} f(\theta_{k}) \sin (s \theta_{k}),
\end{eqnarray}
such that if $\vert f(\theta_k) \vert \leq f_{max} ~ \forall \theta_k$ then,
\begin{equation*}
    \vert S_{n}(\theta) \vert \leq 5 f_{max} \log (n).
\end{equation*}
\end{lemma}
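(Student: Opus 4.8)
The plan is to recognize $S_n$ as the trigonometric interpolant of $f$ at the $2n+1$ equispaced nodes $\theta_k$ of Eq.~\eqref{eq:sampling} and to reduce the claimed bound to an estimate on the associated Lebesgue constant. First I would rewrite $S_n$ in cardinal (Lagrange) form. Substituting the stated coefficient formulas into $S_n$ and using the discrete orthogonality relations for $\{\cos(s\theta_k),\sin(s\theta_k)\}$ on the grid $\theta_k=\tfrac{2\pi(k-1)}{2n+1}$ — each of which is an elementary geometric sum of $(2n+1)$-th roots of unity — one obtains
\[
S_n(\theta)=\sum_{k=1}^{2n+1}f(\theta_k)\,D_k(\theta),\qquad D_k(\theta)=\frac{1}{2n+1}\Bigl(1+2\sum_{s=1}^{n}\cos\bigl(s(\theta-\theta_k)\bigr)\Bigr)=\frac{1}{2n+1}\,\frac{\sin\!\bigl(\tfrac{(2n+1)(\theta-\theta_k)}{2}\bigr)}{\sin\!\bigl(\tfrac{\theta-\theta_k}{2}\bigr)},
\]
the last equality being the Dirichlet-kernel identity. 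By the triangle inequality, $|S_n(\theta)|\leq f_{\max}\sum_{k=1}^{2n+1}|D_k(\theta)|=f_{\max}\,\Lambda_n(\theta)$, so it is enough to prove that the Lebesgue function $\Lambda_n(\theta):=\sum_k|D_k(\theta)|$ satisfies $\Lambda_n(\theta)\leq 5\log n$ for every $\theta$.

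To bound $\Lambda_n(\theta)$ I would fix $\theta$ and exploit the key cancellation that, because $\tfrac{(2n+1)\theta_k}{2}=\pi(k-1)$, the numerator $\bigl|\sin\bigl(\tfrac{(2n+1)(\theta-\theta_k)}{2}\bigr)\bigr|=\bigl|\sin\bigl(\tfrac{(2n+1)\theta}{2}\bigr)\bigr|$ takes the \emph{same} value for all $k$. By $2\pi$-periodicity and relabeling we may assume $\theta$ lies within $\tfrac{\pi}{2n+1}$ of the node $\theta_1=0$. I would then split off the nearest-node term, $|D_1(\theta)|\leq 1$ (from the elementary inequality $|\sin(Nx)|\leq N|\sin x|$, together with the limiting value $D_1(0)=1$), and for the remaining $2n$ terms bound the numerator by $1$ and use $|\sin y|\geq \tfrac{2|y|}{\pi}$ for $|y|\leq\tfrac{\pi}{2}$ to get $|D_k(\theta)|\leq \tfrac{\pi}{(2n+1)|\theta-\theta_k|}$. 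Since the other nodes sit, cyclically, at distances $\geq \tfrac{\pi(2m-1)}{2n+1}$ for $m=1,\dots,n$ on each side of $\theta$ (and all within $\pi$ of $\theta$, so the sine inequality applies), the far terms contribute at most $2\sum_{m=1}^{n}\tfrac{1}{2m-1}\leq 2+\log(2n)$, whence $\Lambda_n(\theta)\leq 3+\log 2+\log n$.

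Finally I would absorb the additive constants into the logarithm: for $n$ above a small threshold, $3+\log 2+\log n\leq 5\log n$, giving $\Lambda_n=\max_\theta\Lambda_n(\theta)\leq 5\log n$ and hence $|S_n(\theta)|\leq 5 f_{\max}\log n$ for all $\theta$, as claimed; the case $\theta=\theta_j$ is covered since there $D_j$ is taken at its limiting value $1$ and $S_n$ interpolates $f$.

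The step that carries the real work is obtaining the explicit constant $5$ rather than merely an $\OC(\log n)$ bound: the reduction to the Lebesgue constant and the classical asymptotic $\Lambda_n\sim\tfrac{2}{\pi}\log n$ are standard, but pinning down a clean explicit bound requires the careful bookkeeping above — the separate treatment of the nearest node, the sharp elementary inequalities $|\sin(Nx)|\leq N|\sin x|$ and $|\sin y|\geq \tfrac{2}{\pi}|y|$, the comparison of $\sum_m\tfrac{1}{2m-1}$ with $\tfrac12\log(2n)$, and checking that the leftover constants are swallowed by $\log n$ over the relevant range of $n$.
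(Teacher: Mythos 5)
Your argument is correct, but it is worth noting that the paper does not actually prove this lemma at all: it simply defers to Ref.~\cite{jackson1913accuracy} with the remark ``A proof for Lemma~2 can be found in Ref.~[Jackson 1913].'' What you have supplied is a self-contained version of the classical argument behind that citation: rewrite $S_n$ in cardinal form $S_n(\theta)=\sum_k f(\theta_k)D_k(\theta)$ with $D_k$ the normalized Dirichlet kernel, reduce the claim to a bound on the Lebesgue function $\Lambda_n(\theta)=\sum_k|D_k(\theta)|$, and control $\Lambda_n$ by treating the nearest node separately (via $|\sin(Nx)|\leq N|\sin x|$) and the remaining $2n$ nodes via $|\sin y|\geq \tfrac{2}{\pi}|y|$ and the odd-harmonic sum $\sum_{m=1}^n\tfrac{1}{2m-1}\leq 1+\tfrac12\log(2n)$. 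All of these steps check out, including the key observation that the numerator $|\sin((2n+1)(\theta-\theta_k)/2)|$ is independent of $k$ on this grid. What your route buys is transparency about where the constant $5$ comes from and, importantly, about its range of validity: your bound $\Lambda_n(\theta)\leq 3+\log 2+\log n$ only implies $5\log n$ for $n\geq 3$, and indeed the lemma as stated is vacuously false for $n=1$ (where $5\log n=0$) — a defect of the statement inherited from the citation, not of your proof, but one that a reader of the bare citation would not notice. The only cosmetic gap is that you invoke ``discrete orthogonality'' to pass to the cardinal form, whereas the passage from the stated coefficient formulas to $S_n(\theta)=\sum_k f(\theta_k)D_k(\theta)$ is direct substitution plus the cosine addition formula; orthogonality is only needed to show $S_n$ interpolates $f$ at the nodes, which is not required for the bound.
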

A proof for Lemma \ref{lemma:Upper_bound_TI} can be found in Ref. \cite{jackson1913accuracy}, and we note that the coefficients $a_{s}, b_{s}$ defined above exactly match those recovered when solving the LSP previously formulated. We now provide a proof for Theorem \ref{theo:bound1}.
\begin{proof}
Let $\mathcal{R}(\theta)$ be the true response function, and let $\widetilde{\mathcal{R}}(\theta)$ be the inferred response obtained from the estimates $\overline{\RC}(\theta_k)$ of the system response at $2n+1$ fields $\theta_k$ defined according to Eq.~\eqref{eq:sampling}. Then, the difference between two trigonometric functions is another trigonometric function which is also $2\pi$ periodic, namely,
\begin{eqnarray}\nonumber
    \mathcal{R}^{\ast}(\theta) &=& \mathcal{R}(\theta)-\widetilde{\mathcal{R}}(\theta) \\
    &=& \sum_{s=1}^n \left[a_{s}^{\ast} \cos(s\theta) + b_{s}^{\ast} \sin(s\theta)\right] + c^{\ast}
\end{eqnarray}
with $a_{s}^{\ast}$, $b_{s}^{\ast}$, and $c^{\ast}$ real-valued coefficients such that
\begin{eqnarray}\nonumber
    a_{s}^{\ast} = \frac{2}{2n+1} \sum_{k=1}^{2n+1} \left(\RC(\theta_{k})-\overline{\RC}(\theta_k)\right) \cos (s \theta_{k}), \quad \text{and} \qquad
    b_{s}^{\ast} = \frac{2}{2n+1} \sum_{k=1}^{2n+1} \left(\RC(\theta_{k})-\overline{\RC}(\theta_k)\right) \sin (s \theta_{k}).
\end{eqnarray}
Hence, defining $\varepsilon=\max_{\theta_k\in P}|\RC(\theta_k)-\overline{\RC}(\theta_k)|$, and using Lemma \ref{lemma:Upper_bound_TI}, shows that 
\begin{equation}\label{eq:five_log}
    \vert \mathcal{R}^{\ast}(\theta) \vert=\vert\mathcal{R}(\theta)-\widetilde{\mathcal{R}}(\theta)\vert \leq  5\varepsilon \log (n).
\end{equation}
The previous shows that for all $\theta$, we have $\vert\mathcal{R}(\theta)-\widetilde{\mathcal{R}}(\theta)\vert \in \OC( \varepsilon \log (n))$.
\end{proof}
\section{Proof of Corollary \ref{app:col1}}\label{app:cor_error}
\begin{corollary}\label{app:col1}
The number of shots $N$, necessary to ensure that with a (constant) high probability, and for all $\theta$, the error $\vert \RC(\theta)-\widetilde{\mathcal{R}}(\theta) \vert\leq \delta$, for an inference  error $\delta$, is in $\Omega\left( \log^3(n)/\delta^2\right)$.
\end{corollary}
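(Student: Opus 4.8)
The plan is to combine the deterministic estimate of Theorem~\ref{theo:bound1} with a routine concentration bound for the finite-shot estimates $\overline{\RC}(\theta_k)$. Theorem~\ref{theo:bound1} already gives $\vert\RC(\theta)-\widetilde{\RC}(\theta)\vert\leq 5\varepsilon\log(n)$ for all $\theta$, with $\varepsilon=\max_{\theta_k\in P}\vert\RC(\theta_k)-\overline{\RC}(\theta_k)\vert$. So the first step is simply to observe that it suffices to choose $N$ large enough that, with the desired constant probability, $\varepsilon\leq\delta/(5\log n)$; the bound $\vert\RC(\theta)-\widetilde{\RC}(\theta)\vert\leq\delta$ then holds for all $\theta$ simultaneously.

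Next I would bound $\varepsilon$ via Hoeffding's inequality. Each of the $N$ measurement outcomes contributing to $\overline{\RC}(\theta_k)$ is a random variable supported in $[-1,1]$ (since $\norm{O}_\infty\leq 1$), with mean $\RC(\theta_k)$, and $\overline{\RC}(\theta_k)$ is their empirical average. Hoeffding then yields, for each fixed $k$ and each $t>0$, $\Pr\bigl[\vert\RC(\theta_k)-\overline{\RC}(\theta_k)\vert>t\bigr]\leq 2\,e^{-Nt^2/2}$. A union bound over the $2n+1$ sampled parameters in $P$ gives $\Pr[\varepsilon>t]\leq 2(2n+1)\,e^{-Nt^2/2}$.

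It then remains to set $t=\delta/(5\log n)$ and to pick $N$ so that the right-hand side is at most a fixed failure probability $\eta$ (say $\eta=1/3$). Solving $2(2n+1)e^{-Nt^2/2}\leq\eta$ gives $N\geq (2/t^2)\log\bigl(2(2n+1)/\eta\bigr) = (50\log^2 n/\delta^2)\log\bigl(2(2n+1)/\eta\bigr)$. Since $\log\bigl(2(2n+1)/\eta\bigr)\in\OC(\log n)$, any $N$ in $\Omega(\log^3(n)/\delta^2)$ satisfies this requirement, which is exactly the statement of the corollary.

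I do not expect a genuine obstacle here: this is a textbook concentration-plus-union-bound argument. The only points that require care are bookkeeping the range of the per-shot random variable (hence the constant in the Hoeffding exponent) and noticing that the union bound over the $2n+1$ points in $P$ contributes only an extra $\OC(\log n)$ factor, which is precisely what upgrades the $\log^2 n$ coming from the factor $t^{-2}$ into the $\log^3 n$ in the claim. Sharper constants could be obtained by exploiting the spectral decomposition of $O$, but this would not affect the scaling.
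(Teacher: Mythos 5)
Your proposal is correct and follows essentially the same route as the paper: Hoeffding's inequality for each $\overline{\RC}(\theta_k)$ (with the same range-$2$ bookkeeping and hence the same exponent $e^{-Nt^2/2}$), a union bound over the $2n+1$ points of $P$, and then Theorem~\ref{theo:bound1} with $t=\delta/(5\log n)$, yielding the identical requirement $N\geq 50\log^2(n)\log[(4n+2)/\eta]/\delta^2$. No gaps.
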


\begin{proof}Let $\zeta_{k} = \vert \RC(\theta_{k}) - \overline{\RC}(\theta_{k}) \vert $ be the estimation error in the system response for a parameter $\theta_{k}$. We start this proof by bounding the probability of such error to be higher than a threshold $\epsilon$.
Recall that an estimate $\overline{\RC}(\theta_{k})$ is obtained as an average over $N$ measurements of an observable $O$. We denote by $X^i_k$ (with $i=1,\hdots,N$) the random variable associated with each measurement. Given that we consider observables $O$ with norm $\norm{O}_\infty\leq 1$, the outcomes of such a measurement take values in the range $[a^i_k, b^i_k]$ who's amplitude can always be bounded as $b^i_k-a^i_k\leq2$ . Using this notation, $\overline{\RC}(\theta_{k})$ is defined as the average $\sum^N_{i=1} (X_k^i/N)$, which is an unbiased estimate of the response, i.e., $E[\overline{\RC}(\theta_{k})] = \RC(\theta_{k})$. 
It can be shown that:
\begin{eqnarray}\nonumber 
    \text{Pr}(\zeta_{k} = |\RC(\theta_{k}) - \overline{\RC}(\theta_{k})| \geq \epsilon) &\leq& 2\, \exp \bigg[-\frac{2\epsilon^2}{\sum^N_{i=1} [(b_k^i - a_k^i)/N]^2} \bigg] \\ 
    &\leq& 2\, \exp [-N\epsilon^2/2]\,,\label{eq:one_err}
\end{eqnarray}
where we have made use of Hoeffding's inequality to derive the first inequality, and of the fact that $b^i_k-a^i_k\leq 2$ to obtain the second one.

As we are interested in the \emph{maximum} estimation error $\epsilon$ appearing in Theorem~\ref{app:theo_bound1}, we now aim to bound the probability that \emph{any} of the $2n+1$ estimation errors $\zeta_{k}$ exceed $\epsilon$. That is, we wish to bound $\text{Pr}(\cup_{k=1}^{2n+1} E_{k})$, where $E_{k}$ is defined as the event when $\zeta_{k}$ is greater than $\epsilon$. 
This is readily achieved by means of Boole's inequality which states that the probability of at least one event (over a set of events) to occur is no greater than the sum of the probabilities of each individual event in the set. Hence we have
\begin{eqnarray} \nonumber 
    \text{Pr} \left( \cup_{k=1}^{2n+1} E_k \right) &\leq& \sum_{k=1}^{2n+1} \text{Pr}(E_{k}) \\ \nonumber
    &\leq&  \sum_{k=1}^{2n+1} 2\,\text{exp}[- N \epsilon^2/2]\\
    &=& (4n+2)\,\text{exp}[- N \epsilon^2/2]
\end{eqnarray}
where in the first inequality we used Boole's inequality, while in the second one we have used Eq.~\eqref{eq:one_err}.

Thus, to ensure that $\text{Pr} \left( \cup_{k=1}^{2n+1}E_{k}  \right) \leq a$ (for a some small positive $a$), we require a number of shots $N$,  such that
\begin{equation}\label{eq:Nmin}
    N \geq \frac{2\,\text{log}[(4n+2)/a]}{\epsilon^2}.
\end{equation}

Finally, we can relate estimation errors to inference errors according to Theorem~\ref{theo:bound1}. In particular it follows from Eq.~\eqref{eq:five_log} that $\text{Pr}(\vert \RC(\theta)-\widetilde{\mathcal{R}}(\theta) \vert \geq 5\epsilon \log (n)) \leq a$ for $N$ given in~\eqref{eq:Nmin}. 
Accordingly, to ensure that $\text{Pr}(\vert \RC(\theta)-\widetilde{\mathcal{R}}(\theta) \vert \geq \delta) \leq a$, with $\delta=5\epsilon \log (n)$, we require that
\begin{equation}\label{eq:shots}
    N\geq \frac{50\,\text{log}^{2}(n)\,\text{log}[(4n+2)/a]}{ \delta^2}.
\end{equation}
Therefore, with a number of shots $N\in\Omega( \log^{3}(n)/\delta^{2})$ we can ensure that $ \vert \RC(\theta) - \widetilde{\RC}(\theta) \vert <  \delta$ with constant probability $1-a$.
\end{proof}

\section{Proof of Corollary \ref{app:col2}}\label{app:cor_error_theta}
\begin{corollary}\label{app:col2}
Let $\epsilon'$ be the estimation error in $\overline{\RC}(\theta')$ for some $\theta'$ in a known domain $\Theta$ where the  system response is bijective. Let $\chi$ be the  error introduced when estimating $\theta'$ via $\widetilde{\RC}(\theta)$ relative to the case when the exact response $\RC(\theta)$ is used. 
The number of shots, $N$, necessary to ensure that with a (constant) high probability $\chi$ is no greater than $\delta'$  is  $\Omega(\log^{3}(n)/(\delta' +\varepsilon')^2)$.
\end{corollary}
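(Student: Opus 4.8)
The plan is to reduce the error in the inferred parameter to the error in the inferred response via the bijectivity (and smoothness) of $\RC$ on $\Theta$, and then invoke the shot-count estimate of Corollary \ref{app:col1}. First I would set up notation: let $\theta^*_{\text{exact}} = \argmin_{\theta\in\Theta}|\RC(\theta)-\overline{\RC}(\theta')|$ be the estimate one would obtain using the exact response, and let $\theta^* = \argmin_{\theta\in\Theta}|\widetilde{\RC}(\theta)-\overline{\RC}(\theta')|$ be the estimate obtained from the inferred response. By definition, $\chi = |\theta^* - \theta^*_{\text{exact}}|$. Since $\RC$ is bijective (hence strictly monotone, as it is continuous) on $\Theta$, its inverse $\RC^{-1}$ is well defined on $\RC(\Theta)$, and I would let $L = \sup_{\theta\in\Theta}|(\RC^{-1})'| = 1/\inf_{\theta\in\Theta}|\partial_\theta\RC(\theta)|$ denote the Lipschitz constant of $\RC^{-1}$, which is finite because $\Theta$ is a compact domain of bijectivity.

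The key step is the chain of inequalities bounding $\chi$. One has $\theta^*_{\text{exact}} = \RC^{-1}(\overline{\RC}(\theta'))$ when $\overline{\RC}(\theta')$ lies in the range of $\RC$ over $\Theta$ (and otherwise it is the endpoint of $\Theta$ closest in response value, which only improves the argument). For $\theta^*$, I would observe that $|\widetilde{\RC}(\theta^*) - \overline{\RC}(\theta')| \le |\widetilde{\RC}(\theta^*_{\text{exact}}) - \overline{\RC}(\theta')| \le |\widetilde{\RC}(\theta^*_{\text{exact}}) - \RC(\theta^*_{\text{exact}})| + |\RC(\theta^*_{\text{exact}}) - \overline{\RC}(\theta')|$, where the first term is bounded by $5\varepsilon\log n$ via Theorem \ref{theo:bound1}. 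Combining this with the triangle inequality $|\RC(\theta^*) - \overline{\RC}(\theta')| \le |\RC(\theta^*) - \widetilde{\RC}(\theta^*)| + |\widetilde{\RC}(\theta^*) - \overline{\RC}(\theta')|$ and again Theorem \ref{theo:bound1} on the first term, one gets that both $\RC(\theta^*)$ and $\RC(\theta^*_{\text{exact}})$ lie within $\OC(\varepsilon\log n)$ of $\overline{\RC}(\theta')$, hence within $\OC(\varepsilon\log n)$ of each other. Applying the Lipschitz bound on $\RC^{-1}$ then yields $\chi = |\theta^* - \theta^*_{\text{exact}}| \le L\cdot|\RC(\theta^*) - \RC(\theta^*_{\text{exact}})| \in \OC(L\,\varepsilon\log n)$.

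Finally I would translate this into a shot count. By the same Hoeffding-plus-union-bound argument as in Corollary \ref{app:col1}, $N \in \Omega(\log^3(n)/\delta^2)$ shots per $\theta_k$ ensure $\varepsilon\log n \le \delta$ with constant probability; but here the relevant target is the combined quantity $\delta' + \varepsilon'$, since the estimation error $\varepsilon'$ in $\overline{\RC}(\theta')$ enters the bound on $|\RC(\theta^*_{\text{exact}}) - \overline{\RC}(\theta')|$ additively alongside the inference error, so that $\chi \in \OC(L(\varepsilon\log n + \varepsilon'))$ and one needs $\varepsilon\log n \lesssim \delta' + \varepsilon'$ — actually one needs $\varepsilon \log n$ small enough that, after absorbing the unavoidable $\varepsilon'$ contribution, the excess error $\chi$ relative to the exact-response case is at most $\delta'$; this gives the requirement $N \in \Omega(\log^3(n)/(\delta'+\varepsilon')^2)$ after setting $\varepsilon \sim (\delta'+\varepsilon')/\log n$ in Eq.~\eqref{eq:shots}. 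The main obstacle I anticipate is handling the case where $\overline{\RC}(\theta')$ falls outside $\RC(\Theta)$ (so that the $\argmin$ is pinned to a boundary point of $\Theta$) cleanly, and more importantly making precise the claim that $\chi$ is measured \emph{relative} to the exact-response estimate — i.e., that the irreducible error coming from $\varepsilon'$ alone is not counted against us — which is exactly why the bound involves $(\delta'+\varepsilon')$ rather than $\delta'$; I would address this by carefully defining $\chi$ as $|\theta^* - \theta^*_{\text{exact}}|$ and noting the $L$-dependence is suppressed under the standing assumption that the response is well-conditioned on $\Theta$ (i.e.\ $\inf_\Theta|\partial_\theta\RC| \in \Omega(1/\poly(n))$, consistent with the sensitivity discussion).
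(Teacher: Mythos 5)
Your proposal is correct and reaches the stated shot count, but it takes a more explicit route than the paper. The paper's proof never defines the two estimators $\theta^*$ and $\theta^*_{\text{exact}}$ at all: it works entirely at the level of response uncertainties, using the error-propagation formula and a geometric argument (a worst-case envelope $\RC(\theta)\pm 5\varepsilon\log(n)$ around the true response) to conclude $\Delta\widetilde{\RC}(\theta)\leq \Delta\RC(\theta)+5\varepsilon\log(n)$, and then simply demands that the excess response error satisfy $5\varepsilon\log(n)-\varepsilon'\leq\delta'$ with high probability; substituting $\eta=(\delta'+\varepsilon')/(5\log n)$ into the Hoeffding-plus-union bound from Corollary~\ref{app:col1} gives $N\geq 50\log^2(n)\log[(4n+2)/b]/(\delta'+\varepsilon')^2$. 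Your version instead pulls the response-level bound back to the parameter level through $\RC^{-1}$, which is more faithful to the literal definition of $\chi$ as an error in $\theta$, and it correctly exposes the Lipschitz factor $L=1/\inf_\Theta|\partial_\theta\RC|$ that the paper's statement silently omits (the paper only confronts this slope dependence in the separate sensitivity analysis, where the $1/D_l$ factor appears explicitly in Theorem~\ref{th:error-sens} and Corollary~\ref{app:col3}). What the paper's approach buys is an exact match to the corollary as stated, at the cost of effectively measuring $\chi$ on the response scale; what yours buys is rigor about the quantity actually being bounded, at the cost of needing the well-conditioning assumption $\inf_\Theta|\partial_\theta\RC|\in\Omega(1/\poly(n))$ that you flag. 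Your handling of the boundary case where $\overline{\RC}(\theta')$ falls outside $\RC(\Theta)$ is a detail the paper does not address either. Both arguments are sound and land on the same $\Omega(\log^3(n)/(\delta'+\varepsilon')^2)$ scaling.
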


\begin{proof}

Let us here consider the problem of estimating an unknown parameter $\theta' \in \Theta$ given a measured system response $\overline{\RC}(\theta')$ with estimation error $\varepsilon'$.  Ideally, one would like to use the exact system response $\RC(\theta)$ for such task. 
However, in practice, one does not have access to $\RC(\theta)$ but only to the inferred response $\widetilde{\RC}(\theta)$, and we now aim at bounding the effect introduced by such approximation.

As sketched on the main text, the error propagation formula allows us to relate the measured system response uncertainty $\Delta\RC(\theta)$ to the uncertainty  $\Delta \theta$ in the estimated parameter as $(\Delta\theta)^2=(\Delta\RC(\theta))^2 /|\partial_{\theta} \RC(\theta)|^2$. 
 From Theorem \ref{theo:bound1}, we know that using  $\widetilde{\RC}(\theta)$ for determining $\theta'$ introduces an error that is upper bounded by $5 \varepsilon \log(n)$ (where $\varepsilon$ is the maximum estimation error over the $2n+1$ parameters $\theta_k$ used to construct $\widetilde{\RC}(\theta)$). A simple geometric argument shows that the uncertainty in the error propagation formula now becomes  $\Delta\widetilde{\RC}(\theta) \leq 5 \varepsilon \log(n)+\varepsilon'$ (see Fig.~\ref{fig:uncertainty}). 
 
 \begin{figure}[h]
    \centering
    \includegraphics[width=1\linewidth]{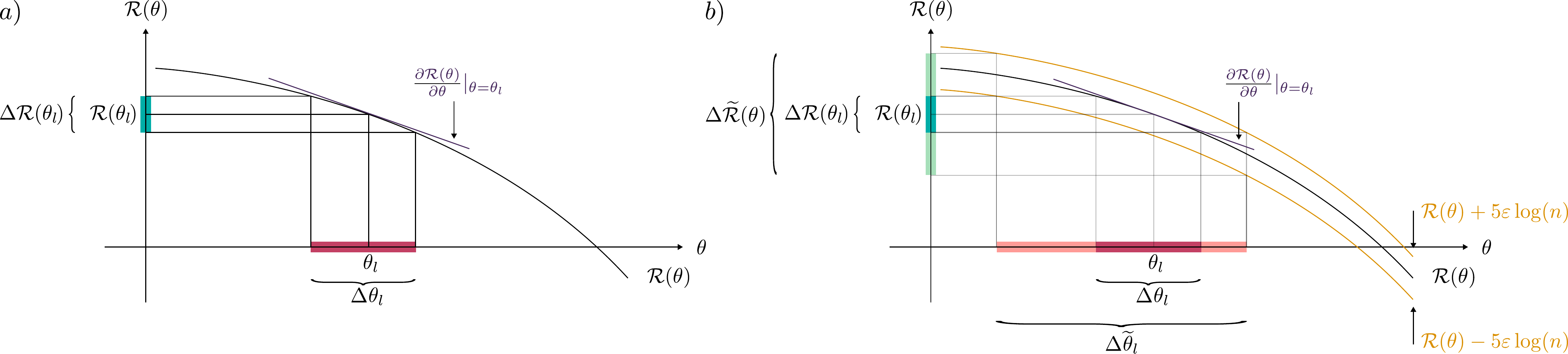}
    \caption{
    \textbf{Error from the inference method.} a) Given the true response, one can use the error propagation formula to relate the uncertainty in the response $\Delta \RC(\theta)$ to an uncertainty in the parameter $\Delta \theta$. b) when using the inferred response $\widetilde{\RC}(\theta)$ one can consider the worst-case scenario curves $\RC(\theta)\pm5\varepsilon\log(n)$ to obtain bounds on the uncertainty $\Delta \widetilde{\mathcal{R}}(\theta)$ of the approximate response and also on the ensuing uncertainty $\Delta \widetilde{\theta}$ in estimates of the parameter. Specifically, one can see that $\Delta\widetilde{\RC}(\theta) \leq \Delta\RC(\theta)+5 \varepsilon \log(n)$.}
    \label{fig:uncertainty}
\end{figure}
 
To ensure that the error introduced by the use of $\widetilde{\RC}(\theta)$ instead of $\RC(\theta)$ does not dominate, we require that $5 \varepsilon \log(n) - \varepsilon'  \geq \delta$, with $\delta$ a positive value, should occur with low probability. 
In other words, we wish that $\varepsilon   \geq (\delta' + \varepsilon')/5\log(n)$ only occurs with a low probability. From the proof of Corollary \ref{app:col1}, we know that
\begin{eqnarray}
    \text{Pr}(\varepsilon \geq \eta)\leq (4n+2)\,\text{exp}[- N \eta^2/2].
\end{eqnarray}
Setting $\eta = (\delta' + \varepsilon')/5\log(n)$ leads to
\begin{eqnarray}
    \text{Pr}(\varepsilon \geq \eta) &\leq& (4n+2)\,\text{exp}\bigg[- \frac{N(\delta' + \varepsilon')^2}{50\,\text{log}^2(n)}\bigg]
\end{eqnarray}
Therefore to ensure that $\text{Pr}(5 \varepsilon \log(n) - \varepsilon' \geq \delta' ) \leq b$ we require the number of shots to satisfy
\begin{equation}\label{eq:shots-needed}
    N \geq \frac{50\,\log^2(n) \log[(4n+2)/b]}{(\delta' + \varepsilon')^2},
\end{equation}
i.e., such that $N\in\Omega(\log^{3}(n)/(\delta' +\varepsilon')^2)$.

\end{proof}

\section{Proof of Theorem \ref{theo:bound2}}\label{app:theo_bound2}
\begin{theorem}\label{theo:bound2}
Let $f(\theta)$ be a $2\pi$-periodic function with $|f(\theta)|\leq 1$ $\forall \theta$, and let  $\widetilde{\RC}(\theta)$ be its trigonometric polynomial  approximation obtained from the $N$-shot estimates of $\overline{f}(\theta_k)$, with $\theta_k$ given by Eq.~\eqref{eq:sampling}. Defining the maximum estimation error $\varepsilon'=\max_{\theta_k\in P}|f(\theta_k)-\overline{f}(\theta_k)|$,  and assuming that  $|\theta-\theta_k| \in\OC(1/\poly(n))$, then
\begin{eqnarray}
    \vert f(\theta)-\widetilde{\mathcal{R}}(\theta) \vert \in\OC\left(\max\left\{\frac{M}{\poly(n)},\varepsilon'\log(n)\right\}\right)\, ,
\end{eqnarray}
where $M$ is the Lipschitz constant of $f(\theta)$.
\end{theorem}

Before providing a proof for Theorem \ref{theo:bound2}, we introduce the following lemma that bounds the error of using a trigonometric polynomial to approximate a  periodic function \cite{petras1995error}.
\medskip
\begin{lemma} \label{lemma:petras_error} 
Let $f(\theta)$ be a $2\pi$-periodic function with Lipschitz-constant $M$ and let $S_{n}$ represent the trigonometric interpolation polynomial of order $n$ used to approximate  $f$ via the values $f(\theta_k)$ at $2n+1 $ equidistant nodes, $\theta_{k}= \frac{2 k \pi}{2n+1}$, $k=0,1,\ldots,2n$. The trigonometric approximation error is such that for all $\theta$
\begin{equation*}
    \vert f(\theta) - S_{n}(\theta) \vert \in \OC\left(M \frac{\log (n)}{n} \left\vert \sin \big( (n+1/2)\,\theta \big) \right\vert\right).
\end{equation*}
\end{lemma}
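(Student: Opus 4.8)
The plan is to prove Lemma~\ref{lemma:petras_error} directly from the cardinal (Lagrange) representation of the trigonometric interpolant, without appeal to Jackson-type best-approximation estimates. For the $2n+1$ equispaced nodes $\theta_k=2k\pi/(2n+1)$ one has $S_n(\theta)=\sum_{k=0}^{2n}f(\theta_k)\ell_k(\theta)$ with the cardinal functions $\ell_k(\theta)=\frac{1}{2n+1}\frac{\sin((n+\tfrac12)(\theta-\theta_k))}{\sin(\tfrac12(\theta-\theta_k))}$. I would first record the two elementary facts that characterize these: $\ell_k(\theta_j)=\delta_{jk}$, and $\sum_{k}\ell_k\equiv 1$ (interpolation is exact on the constant function, a degree-$0$ trigonometric polynomial). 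The decisive algebraic step is the node phase identity $(n+\tfrac12)\theta_k=k\pi$, which gives $\sin((n+\tfrac12)(\theta-\theta_k))=(-1)^k\sin((n+\tfrac12)\theta)$. This both explains why $|\sin((n+\tfrac12)\theta)|$ must appear in the bound — it is exactly the factor that vanishes at the nodes, where the interpolation error is zero — and lets me pull it out of every term. Writing $f(\theta)=f(\theta)\sum_k\ell_k(\theta)$ and subtracting yields the exact remainder formula
\[
 f(\theta)-S_n(\theta)=\frac{\sin((n+\tfrac12)\theta)}{2n+1}\sum_{k=0}^{2n}\frac{(-1)^k\,[f(\theta)-f(\theta_k)]}{\sin(\tfrac12(\theta-\theta_k))}\,,
\]
so everything reduces to showing that the alternating sum $\Sigma:=\sum_k(-1)^k w_k$, with $w_k:=[f(\theta)-f(\theta_k)]/\sin(\tfrac12(\theta-\theta_k))$, obeys $|\Sigma|\in\OC(M\log n)$.

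I would then split the nodes into the one or two closest to $\theta$ and the remaining ``far'' nodes. For any node, the Lipschitz bound $|f(\theta)-f(\theta_k)|\le M\,d(\theta,\theta_k)$ (circular distance) combined with $|\sin x|\ge 2|x|/\pi$ already gives the uniform estimate $|w_k|\le\pi M$, so the near contribution is $\OC(M)$ and the singularity of $1/\sin(\tfrac12(\theta-\theta_k))$ at a node is automatically cancelled by the vanishing numerator. The far nodes are where the logarithm is born and where the alternating sign is essential: applying summation by parts (Abel summation) to $\sum_{\mathrm{far}}(-1)^k w_k$ bounds it by the total variation $\sum_{\mathrm{far}}|w_{k+1}-w_k|$ plus $\OC(M)$ boundary terms.

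The heart of the argument — and the step I expect to be the main obstacle — is estimating this total variation. Writing the quotient difference $w_{k+1}-w_k=\frac{(N_{k+1}-N_k)s_k+N_k(s_k-s_{k+1})}{s_k s_{k+1}}$, with $N_k=f(\theta)-f(\theta_k)$ and $s_k=\sin(\tfrac12(\theta-\theta_k))$, produces two pieces. The first uses adjacent-node Lipschitz continuity $|N_{k+1}-N_k|\le Mh$, $h=2\pi/(2n+1)$, and $|s_k|\gtrsim jh$ for the $j$-th nearest node, giving $\sum_j (Mh)/(jh)=\OC(M\log n)$. The second piece is the dangerous one: $|s_k-s_{k+1}|\le h/2$ but $|s_k s_{k+1}|\sim (jh)^2$, so a crude bound $|N_k|\le\pi M$ (global oscillation) would yield $\sum_j M/(j^2 h)=\OC(Mn)$, which is far too large. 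The resolution is to retain the \emph{sharp, distance-dependent} Lipschitz estimate $|N_k|\le M\,d(\theta,\theta_k)\lesssim Mjh$ for the far node at circular distance $\sim jh$; this extra factor of $j$ turns the second piece into $\sum_j M/j=\OC(M\log n)$ as well. Thus the alternation, exploited through summation by parts, converts what would be an $\OC(Mn)$ sum into $\OC(M\log n)$, the logarithm coming from the harmonic series $\sum_{j=1}^{n}1/j$.

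Combining the near ($\OC(M)$) and far ($\OC(M\log n)$) estimates gives $|\Sigma|\in\OC(M\log n)$, and reinserting the prefactor $|\sin((n+\tfrac12)\theta)|/(2n+1)$ yields $|f(\theta)-S_n(\theta)|\in\OC\!\big(M\tfrac{\log n}{n}|\sin((n+\tfrac12)\theta)|\big)$, as claimed. The point to keep in mind is that the localization factor and the $\log n/n$ magnitude must be produced \emph{together}: a crude termwise bound of $\Sigma$ that ignores the cancellation leaves an $\OC(Mn)$ total and only recovers the trivial $\OC(M|\sin((n+\tfrac12)\theta)|)$, while the standard Lebesgue-constant route (best approximation plus $\sum_k|\ell_k|\in\OC(\log n)$) gives the non-localized uniform estimate $\OC(M\log n/n)$. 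Extracting both features at once is precisely what the near/far split together with the sharp distance-dependent Lipschitz estimate in the total-variation bound accomplishes.
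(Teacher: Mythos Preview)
The paper does not prove this lemma at all: it simply states ``A proof for Lemma~\ref{lemma:petras_error} can be found in Ref.~\cite{petras1995error}'' and moves on. Your proposal, by contrast, is a genuine self-contained argument. The route you take --- cardinal representation, the node phase identity $(n+\tfrac12)\theta_k=k\pi$ to extract the common factor $\sin((n+\tfrac12)\theta)$, the exact remainder formula using $\sum_k\ell_k\equiv 1$, then a near/far split with Abel summation on the far part --- is correct, and your identification of the ``dangerous'' piece in the quotient-difference decomposition together with its rescue via the distance-dependent Lipschitz bound $|N_k|\lesssim Mjh$ is exactly the right diagnosis of where the logarithm comes from. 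What you gain over the paper's treatment is obvious: a proof rather than a citation, and one that makes transparent why the localization factor $|\sin((n+\tfrac12)\theta)|$ and the $\log n/n$ rate appear together. A couple of minor points you should tidy up when writing it out fully: make explicit that Abel summation is applied to a linearly ordered range of indices (so either treat the two arcs on each side of $\theta$ separately, or note that the at most $O(1)$ ``near'' differences contribute $O(M)$ to the total variation since each $|w_k|\le\pi M$), and be careful about the wraparound between $k=2n$ and $k=0$ when bounding $|s_k-s_{k+1}|$. These are routine bookkeeping issues and do not affect the correctness of the strategy.
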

A proof for Lemma \ref{lemma:petras_error} can be found on Ref. \cite{petras1995error}. We now provide a proof for Theorem \ref{theo:bound2}.
\begin{proof}
Let $f(\theta)$ be the response function of the system, 
and let $\mathcal{R}(\theta)$ be a trigonometric polynomial of order $n$ used to estimate $f(\theta)$. 
It follows from Lemma \ref{lemma:petras_error} that when $f$ is $2\pi$-periodic with Lipschitz constant $M$, we have 
\begin{equation}\label{eq:petras}
    \vert f(\theta) - \mathcal{R}(\theta) \vert \in \OC \left(M \frac{\log(n)}{ n} \left\vert \sin \big( (n+1/2)\,\theta \big) \right\vert\right).
\end{equation}
Furthermore, when $|\theta - \theta_{k}| \leq \OC\left(1/\poly(n)\right)$ for any $\theta_{k}=\frac{2\pi k}{2n+1}$, we can verify using the small angle approximation that  
\begin{eqnarray} \nonumber
   \left\vert \sin \big( (n+1/2)\,\theta \big) \right\vert \in \mathcal{O}(1/n).
\end{eqnarray}
Hence, Eq. \eqref{eq:petras} can be rewritten as
\begin{equation}
    \vert f(\theta) - \mathcal{R}(\theta) \vert \leq \mathcal{O}\left( \frac{M \log(n)}{n^{2}}\right).
\end{equation}

So far, we have only considered the error of approximating $f(\theta)$ by a trigonometric polynomial $\RC(\theta)$ obtained from the set of system responses $f(\theta_k)$, with  $k=0,\ldots,2n$.  However, in practice we can only estimate such responses via measurements on a quantum device. Defining  $\overline{f}(\theta_k)$ as the $N$-shot estimates of the system response, and defining $\widetilde{\RC}(\theta)$ as the trigonometric polynomial obtained from  the measured values $\overline{f}(\theta_k)$, we now aim at bounding the error $\vert f(\theta) - \widetilde{\mathcal{R}}(\theta) \vert$. To do so, one can use the following chain of inequalities
\begin{eqnarray*} 
    \vert f(\theta) - \widetilde{\mathcal{R}}(\theta) \vert & = & \vert f(\theta)- \mathcal{R}(\theta) + \mathcal{R}(\theta) - \widetilde{\mathcal{R}}(\theta) \vert \\ 
    &\leq & \vert f(\theta)- \mathcal{R}(\theta)\vert + \vert \mathcal{R}(\theta) - \widetilde{\mathcal{R}}(\theta) \vert \\ 
    &\leq& 2\max\left\{ \vert f(\theta)- \mathcal{R}(\theta)\vert, \vert \mathcal{R}(\theta) - \widetilde{\mathcal{R}}(\theta) \vert\right\}.
\end{eqnarray*}
Defining the maximum estimation error $\varepsilon'=\max_{\theta_k\in P}|f(\theta_k)-\overline{f}(\theta_k)|$  and leveraging results from Theorem \ref{theo:bound1} and Lemma \ref{lemma:petras_error}, and assuming that $|\theta-\theta_k| \in\OC(1/\poly(n))$, leads to
\begin{equation}
    \vert f(\theta) - \widetilde{\mathcal{R}}(\theta) \vert \in\OC\left(\max\left\{\frac{M}{\poly(n)},\varepsilon'\log(n)\right\}\right).
\end{equation}
\end{proof}

\section{Error in the sensitivity}\label{app:error_sensitivity}

As stated in the main text, once the response function is known, one can use it to directly compute the sensitivity at a field $\theta=\theta_l$ using the error propagation formula~\cite{pezze2018quantum}
\begin{equation}
    \Delta\theta_k=\frac{\Delta\RC(\theta_k) }{|\partial_{\theta} \RC(\theta)\vert_{\theta=\theta_k}\vert}\,.\label{eq:sens-ex}
\end{equation}
In practice, since  one does not have access to the exact response function, one needs to use its $N$-shot approximate $\widetilde{\RC}(\theta)$, which leads to an approximate sensitivity
\begin{equation}
\Delta\widetilde{\theta_l}=\frac{\Delta\widetilde{\RC}(\theta_l) }{|\partial_{\theta} \widetilde{\RC}(\theta)\vert_{\theta=\theta_l}\vert}\,.\label{eq:sens-ap}
\end{equation}
Here we will analyze the error induced by using $\widetilde{\RC}(\theta)$ instead of $\RC(\theta)$. Specifically, the following theorem holds. 
\begin{theorem}\label{th:error-sens}
Let $\RC(\theta)$ be the exact response function, and let $\widetilde{\RC}(\theta)$ be its approximation obtained from the $N$-shot estimates $\overline{\RC}(\theta_k)$ with $\theta_k$ given by Eq.~\eqref{eq:sampling}.  Defining the maximum estimation error $\varepsilon=\max_{\theta_k\in P}|\RC(\theta_k)-\overline{\RC}(\theta_k)|$, and the slope of $\RC(\theta)$ at a field $\theta_k$  $D_l=|\partial_{\theta} \widetilde{\RC}(\theta)\vert_{\theta=\theta_l}|$, then
\begin{equation}
    \vert \Delta\theta-\Delta\widetilde{\theta} \vert \in\OC\left(\frac{\varepsilon\log(n)}{D_l}\right)\,.
\end{equation}
\end{theorem}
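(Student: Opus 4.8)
The plan is to start from the error-propagation formulas \eqref{eq:sens-ex}--\eqref{eq:sens-ap}, writing $\Delta\theta_l = V_l/|D'_l|$ and $\Delta\widetilde\theta_l = \widetilde V_l/D_l$, where $V_l=\Delta\RC(\theta_l)$, $\widetilde V_l=\Delta\widetilde\RC(\theta_l)$ (both $\le 1$ for a bounded observable), $D'_l=\partial_\theta\RC|_{\theta=\theta_l}$, and $D_l=|\partial_\theta\widetilde\RC|_{\theta=\theta_l}|$. Putting the difference over a common denominator and adding and subtracting $\widetilde V_l D_l$ gives the decomposition
\[
    |\Delta\theta_l-\Delta\widetilde\theta_l|\;\le\;\frac{|V_l-\widetilde V_l|}{|D'_l|}\;+\;\frac{\widetilde V_l\,\bigl|\,|D_l|-|D'_l|\,\bigr|}{|D'_l|\,|D_l|}\,,
\]
so it suffices to control (i) the numerator discrepancy $|V_l-\widetilde V_l|$, (ii) the slope discrepancy $\bigl|\,|D_l|-|D'_l|\,\bigr|$, and (iii) a lower bound on $|D'_l|$ relating it to $D_l$.

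For (i): since we measure a Pauli observable, $O^2=\id$, so $V_l=\sqrt{1-\RC(\theta_l)^2}$ and $\widetilde V_l=\sqrt{1-\widetilde\RC(\theta_l)^2}$. Factoring $V_l^2-\widetilde V_l^2=\bigl(\widetilde\RC(\theta_l)-\RC(\theta_l)\bigr)\bigl(\widetilde\RC(\theta_l)+\RC(\theta_l)\bigr)$, Theorem~\ref{theo:bound1} bounds the first factor by $5\varepsilon\log n$ and Lemma~\ref{lemma:Upper_bound_TI} (or simply $|\widetilde\RC|\le 1$) bounds the second by $\OC(\log n)$, so $|V_l^2-\widetilde V_l^2|\in\OC(\varepsilon\log n)$. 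Dividing by $V_l+\widetilde V_l$, which stays bounded below precisely in the regime where the scheme is actually sensitive at $\theta_l$ (i.e.\ $\RC(\theta_l)$ bounded away from $\pm1$), yields $|V_l-\widetilde V_l|\in\OC(\varepsilon\log n)$, contributing $\OC(\varepsilon\log n/D_l)$ once one knows $|D'_l|=\Theta(D_l)$.

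For (ii) and (iii): the key structural observation is that $\widetilde\RC-\RC$ is itself a trigonometric polynomial of degree $n$ (a difference of two such, by Theorem~\ref{theo:trig-pol}) that is uniformly bounded by $5\varepsilon\log n$ by the proof of Theorem~\ref{theo:bound1}. Bernstein's inequality for trigonometric polynomials then gives $\|\partial_\theta(\widetilde\RC-\RC)\|_\infty\le 5n\varepsilon\log n$, hence $\bigl|\,|D_l|-|D'_l|\,\bigr|\le 5n\varepsilon\log n$ and $|D'_l|\ge D_l-5n\varepsilon\log n$. In the regime assumed for the subsequent corollary, where $D_l\in\Omega(1/\poly n)$ and enough shots are taken that $n\varepsilon\log n\ll D_l$, this gives $|D'_l|=\Theta(D_l)$ and lets the slope term be absorbed into (or dominated by the logic behind) the $\OC(\varepsilon\log n/D_l)$ estimate; combining (i)--(iii) gives the claimed bound $|\Delta\theta-\Delta\widetilde\theta|\in\OC(\varepsilon\log n/D_l)$.

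The main obstacle is item (ii): the Jackson-type interpolation bound (Lemma~\ref{lemma:Upper_bound_TI}) controls only the magnitude of $\widetilde\RC-\RC$, not its derivative, so one is forced through Bernstein's inequality, which costs a factor of $n$. This is exactly what degrades the shot count from poly-logarithmic (Corollary~\ref{app:col1}) to polynomial in the sensitivity setting, and it is why the non-degeneracy hypotheses—$D_l\in\Omega(1/\poly n)$ and $\RC(\theta_l)$ bounded away from $\pm1$—are needed for the statement to carry content; care must also be taken because the sensitivity diverges in flat regions of the response, where both $D'_l$ and $D_l$ vanish.
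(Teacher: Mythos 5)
Your route is genuinely different from the paper's, and considerably more careful. The paper's proof is a three-line geometric argument: it asserts (via the worst-case band picture of Fig.~\ref{fig:uncertainty}) that $\Delta\widetilde{\RC}(\theta)\leq \Delta\RC(\theta)+5\varepsilon\log(n)$, divides by the slope of the \emph{exact} response $|\partial_\theta\RC(\theta)\vert_{\theta=\theta_l}|$ throughout (silently identifying it with the $D_l$ of the statement, even though the statement defines $D_l$ via $\widetilde\RC$), and invokes ``$\Delta\widetilde{\theta}_l\geq\Delta\theta_l$ by definition'' to turn the one-sided bound into $|\Delta\widetilde{\theta}_l-\Delta\theta_l|\leq 5\varepsilon\log(n)/D_l$. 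At no point does the paper compare $\partial_\theta\widetilde\RC$ with $\partial_\theta\RC$; the slope discrepancy you isolate as item (ii) is simply absent from its accounting, as is your numerator analysis in item (i).

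That said, item (ii) is where your own argument has a residual gap. Bernstein's inequality gives $\bigl|\,|D_l|-|D'_l|\,\bigr|\leq 5n\varepsilon\log(n)$, so the slope term in your decomposition is $\OC\bigl(n\varepsilon\log(n)/(D_l|D'_l|)\bigr)$, and this is $\OC(\varepsilon\log(n)/D_l)$ only if $|D'_l|\in\Omega(n)$ --- a far stronger hypothesis than $D_l\in\Omega(1/\poly(n))$, and one attained only near the Heisenberg limit. Saying the term is ``absorbed into the logic behind'' the estimate is not a proof step; as written, your bound is $\OC\bigl(\varepsilon\log(n)/D_l+n\varepsilon\log(n)/(D_l|D'_l|)\bigr)$, which matches the claimed rate only under that extra assumption. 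This is less a defect of your argument than a genuine looseness in the theorem as the paper proves it: your honest bookkeeping exposes a term that the paper's graphical argument suppresses by propagating all errors through the exact slope only. To recover the paper's conclusion exactly, you would either have to adopt that convention or add the non-degeneracy hypotheses you already flag.
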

\begin{proof}
First, using a geometric argument similar to that in the proof of  Corollary \ref{app:col2} in Section~\ref{app:cor_error_theta} we have that  $\Delta\widetilde{\RC}(\theta) \leq \Delta\RC(\theta)+ 5 \varepsilon \log(n)$. Then, we can see from Fig.\ref{fig:uncertainty}(b) that one can write 
\begin{equation}
\Delta\widetilde{\theta_l}\leq \frac{\Delta\RC(\theta)+ 5 \varepsilon \log(n)}{|\partial_{\theta} \RC(\theta)\vert_{\theta=\theta_l}\vert}\,.
\end{equation}
Since, by definition, we have that $\Delta\widetilde{\theta_l}\geq \Delta\theta_l$,
then the following inequality holds
\begin{align}
    |\Delta\widetilde{\theta_l}-\Delta\theta_l|&\leq \frac{5 \varepsilon \log(n)}{|\partial_{\theta} \RC(\theta)\vert_{\theta=\theta_l}\vert}\\
    &=\frac{5 \varepsilon \log(n)}{D_l}\,,
\end{align}
which completes the proof.

\end{proof}

Theorem~\ref{th:error-sens} indicates that using the inferred response to estimate the sensitivity leads to an error that scales linearly $\varepsilon$ and $\log(n)$, but inversely proportional to $D_l$. The previous is expected, as $\Delta\theta$ diverges in regions where the system response is flat, i.e., in regions where $D_l$ approaches zero. Notably, Theorem~\ref{th:error-sens}  allows us to derive the following corollary:
\medskip
\begin{corollary}\label{app:col3}
The number of shots $N$, necessary to ensure that with a (constant) high probability the error in the sensitivity $ \vert \Delta\theta-\Delta\widetilde{\theta} \vert\leq \delta''$ at parameter $\theta=\theta_l$, for an inference sensitivity   error $\delta''$, is in $\Omega\left(\frac{ \log^3(n)}{(D_l\delta'')^2}\right)$.
\end{corollary}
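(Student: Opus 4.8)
The plan is to combine Theorem~\ref{th:error-sens} with the concentration bound for the maximum estimation error $\varepsilon$ that was already established inside the proof of Corollary~\ref{app:col1}. Concretely, Theorem~\ref{th:error-sens} gives the deterministic bound $\vert\Delta\theta-\Delta\widetilde{\theta}\vert\leq 5\varepsilon\log(n)/D_l$, so the event $\{\vert\Delta\theta-\Delta\widetilde{\theta}\vert\geq\delta''\}$ is contained in the event $\{\varepsilon\geq D_l\delta''/(5\log(n))\}$. The first step is therefore just to rewrite the target probability in terms of $\varepsilon$: it suffices to ensure $\mathrm{Pr}(\varepsilon\geq \eta)\leq a$ with the threshold choice $\eta=D_l\delta''/(5\log(n))$.

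Next I would invoke the tail bound $\mathrm{Pr}(\varepsilon\geq\eta)\leq(4n+2)\exp[-N\eta^2/2]$, which follows from Hoeffding's inequality together with Boole's union bound over the $2n+1$ sampled parameters (exactly as derived in Section~\ref{app:cor_error}; I would cite it rather than rederive it). Substituting $\eta=D_l\delta''/(5\log(n))$ gives
\begin{equation*}
    \mathrm{Pr}\!\left(\vert\Delta\theta-\Delta\widetilde{\theta}\vert\geq\delta''\right)\leq(4n+2)\,\exp\!\left[-\frac{N(D_l\delta'')^2}{50\log^2(n)}\right]\,.
\end{equation*}
Requiring the right-hand side to be at most some small constant $a$ and solving for $N$ yields
\begin{equation*}
    N\geq\frac{50\log^2(n)\,\log[(4n+2)/a]}{(D_l\delta'')^2}\,,
\end{equation*}
which is in $\Omega(\log^3(n)/(D_l\delta'')^2)$, completing the argument.

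This is a short corollary and I do not anticipate a genuine obstacle; the only point requiring a little care is the direction of the inclusion of events (ensuring that a small $\varepsilon$ really does force a small sensitivity error, which is where the monotonic dependence on $\varepsilon$ in the bound of Theorem~\ref{th:error-sens} is used) and the implicit assumption that $D_l$ is treated as a fixed, known quantity at the parameter $\theta_l$ of interest, so that it can be pulled through the probability. One could also remark, as with Corollary~\ref{app:col1}, that the bound degrades gracefully when $D_l\in\Omega(1/\poly(n))$, in which case $N$ remains polylogarithmic in $n$ for fixed $\delta''$; this is worth a sentence but is not part of the formal proof.
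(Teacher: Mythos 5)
Your proof is correct and follows essentially the same route as the paper: both arguments reduce the sensitivity error to the response-function inference error via the deterministic bound $\vert\Delta\theta-\Delta\widetilde{\theta}\vert\leq 5\varepsilon\log(n)/D_l$ from Theorem~\ref{th:error-sens}, then apply the Hoeffding-plus-union-bound concentration on $\varepsilon$ from Corollary~\ref{app:col1} with the rescaled threshold $\delta=D_l\delta''$, arriving at the identical shot count $N\geq 50\log^2(n)\log[(4n+2)/a]/(D_l\delta'')^2$. Your version is if anything slightly more explicit about the inclusion of events, but there is no substantive difference.
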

\begin{proof}
The proof follows by first considering that, with high probability, $\vert \RC(\theta)-\widetilde{\mathcal{R}}(\theta) \vert\leq \delta$ as in Corollary~\ref{app:col2}. The previous implies that,  with high probability, $   |\Delta\widetilde{\theta_l}-\Delta\theta_l|\leq \frac{\delta}{D_l}$ (one can geometrically see that this holds from Fig.~\ref{fig:uncertainty}(b)). Setting the equality $\frac{\delta}{D_l}=\delta''$ and replacing in 
Eq.~\eqref{eq:shots} proves that a number of shots $N\in\Omega( \log^{3}(n)/(D_l\delta'')^{2})$ are necessary to ensure that $ \vert \Delta\theta-\Delta\widetilde{\theta} \vert\leq \delta''$ with constant probability.


\end{proof}

In Fig.~\ref{fig:estimating_the_sensitivity}, we report a study of the scaling of the errors resulting from the estimation of the sensitivity of two magnetometry setups: the GHZ state with a parity measurement, and the spin squeezed state with a single qubit $Z$-measurement. Errors are assessed across a range where no divergences in the sensitivities are observed. 
For the GHZ state and the spin squeezed state, this occurs at the ranges given by $\theta \in (-\pi /3n, \pi /3n)$, and $\theta\in (\pi /3n, \pi/n )$, respectively. No plot for the setup involving random unitary preparation (discussed in the main text) is shown here as the inherent randomness in the preparation of the probe state makes it difficult to systematically identify a region where the sensitivity is well-behaved over varied system sizes.
\begin{figure}[h]
    \centering
    \includegraphics[scale = 1]{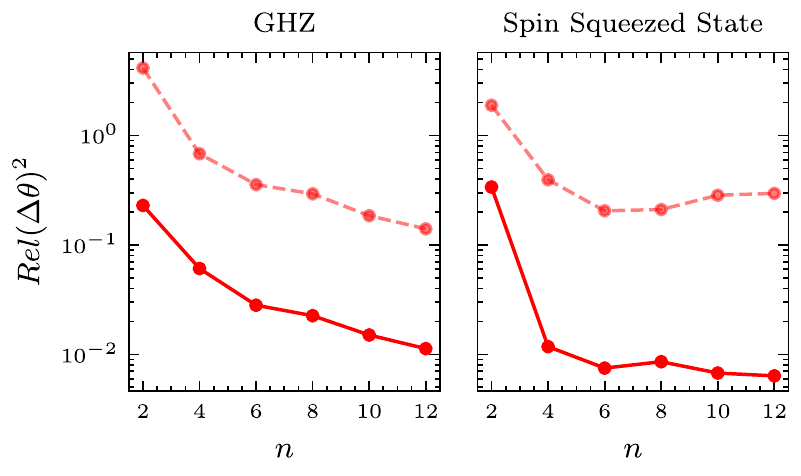}
    \caption{\textbf{Scaling of the sensitivity prediction errors.}
    We show the scaling of the median (solid) and maximal (dashed) relative errors in the sensitivity estimates for the GHZ (left panel) and spin squeezed (right panel) experimental setups. 
    The inferred response functions, used to predict the sensitivities, were estimated using $N=\lceil 5 \times 10^{2}\log(n)^{2}\log(2 \times 10^{2} (2n+1)) \rceil$.    The relative error is defined as the absolute error normalised by the range of sensitivity values. 
}
    \label{fig:estimating_the_sensitivity}
\end{figure}

\section{First-order noisy response for a GHZ magnetometry task }\label{app:first-order_noise}

Here we consider a magnetometry task  where one prepares the GHZ state, encodes a magnetic field  via 
\begin{equation}
   \mathcal{S}_\theta(\rho) = e^{-i \theta H/2}\rho ~e^{i \theta H/2}= \rho_{\theta}\,.
\end{equation}
with $H =\sum_{j=1}^n Z_{j}$,  and measures the parity operator $O=\bigotimes_{j=1}^nX_j$. Here, $Z_{j}$ and $X_j$ are  the Pauli $z$ and $x$ operators acting on the $j$-th qubit, respectively. Moreover, we assume that noise acts during the state preparation, parameter encoding, pre-measurement channel and measurement. 

First, let us consider the action of noise acting during the state preparation and pre-measurement channel measurement. As shown in~\cite{marshall2020characterizing,xue2021effects} to first order in the noise parameters, the noisy response function $\RC_{\text noisy}(\theta)$ can be expressed as
\begin{align}
    \RC_{\text noisy}(\theta)=a\RC_{\text noiseless}(\theta)+b\,,
\end{align}
where $\RC_{\text noiseless}(\theta)$ denotes the noiseless response function, while $a$ and $b$ are noise-dependent parameters. Then, assuming a coherent error during the parameter encoding, such that the source encodes a parameter $\theta+\delta\theta$ rather than $\theta$ one would get
\begin{align}
    \RC_{\text noisy}(\theta)=a\RC_{\text noiseless}(\theta+\delta\theta)+b\,.
\end{align}
For the special case of a magnetometry task, where the noiseless response is a simple cosine, the previous equations becomes 
\begin{align}\label{eq:noisy}
    \RC_{\text noisy}(\theta)=a\cos(\theta+\delta\theta)b\,.
\end{align}
Hence, assuming that Eq.~\eqref{eq:noisy} holds, it is reasonable to fit the noisy response function via $g(\theta)= \alpha\cos(\beta \theta+ \gamma)+\zeta$. We note however, that as shown in the numerics of the main text, such first order approximation does not hold when realistic noise models are considered, as the fit obtained through $g(\theta)$  can greatly differ from the measured response function.


\section{Error in the parameter estimation.}\label{app:param_estimate}
Once the inferred form $\widetilde{\RC}(\theta)$ of the response obtained, it can be used to estimate the value of a unknown field $\theta'$ in a given range. Given a measurement $\overline{\RC}(\theta')$ of the response for this field, and provided that the response is bijective in the range considered,
we define the estimated field value $\theta^{*}$ as follows:
\begin{equation}
    \theta^*=\argmin_\Theta |\widetilde{\RC}(\theta)-\overline{\RC}(\theta')|.
\end{equation}

This method was also used in the main text for the experimental data obtained from the IBM quantum computers. In Fig.~\ref{fig:error_in_the_inversion}, we report a study of the absolute error in such predictions for a GHZ setting numerically simulated over system sizes of up to $n=12$ qubits. The errors resulting from the use of the inferred response function (red curve) are compared to the errors resulting from the use of a  fit given by $g(\theta) = \alpha \cos(\beta\theta + \gamma) + \zeta$ (black curve). We find that the inferred function significantly outperforms the performance of the  fit given by $g(\theta)$.  Note that the inversion is estimated over the range $(\theta - \pi / 10n, \theta + \pi /10n)$ leading to a maximal error of $\pi/10n$. 

\begin{figure}[t]
    \centering
    \includegraphics[scale=1]{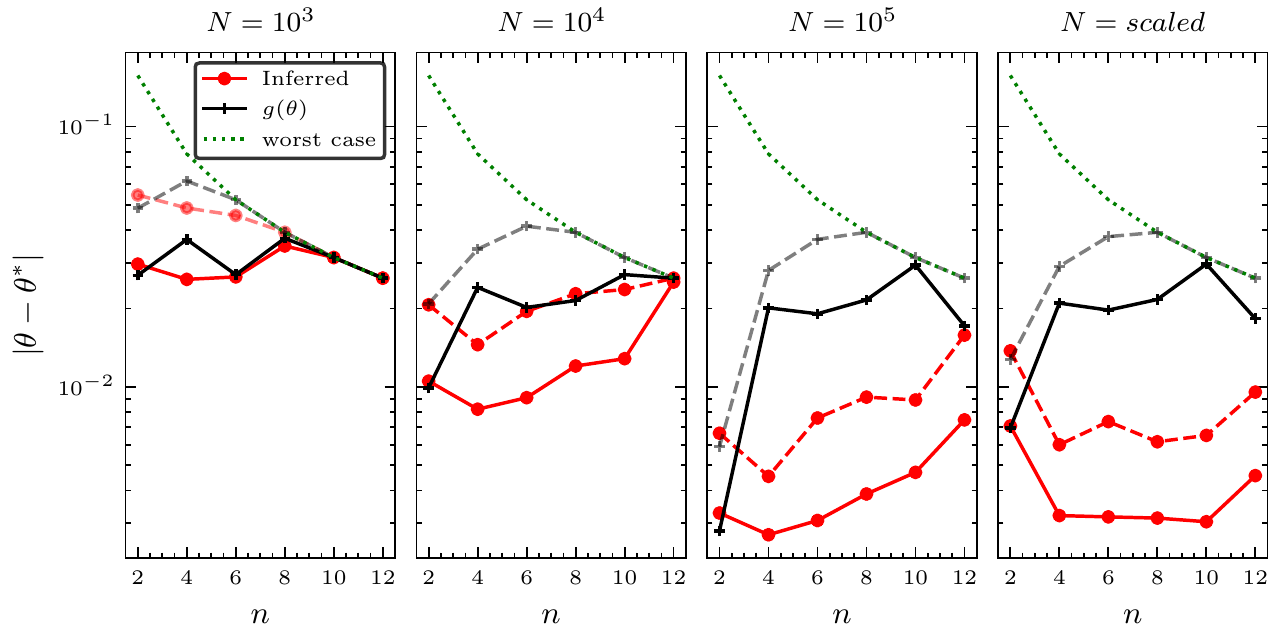}
    \caption{
    \textbf{Field prediction errors}.  We show the scaling of the median (solid) and upper quartile (dashed) absolute errors in the $\theta^*$ values predicted by inverting either the inferred response function (red curve) or a $g(\theta)$ fit  (black curve) for different numbers of shots.
    These are obtained for a magnetometry setting with a probe GHZ state, over $30$ repeats per system sizes $n\in[2,12]$, and with $30$ randomly selected $\theta'$ angles in the range $(0, 2\pi)$ per system size. Each repeat includes a different shot noise realization of the exact observable. The response functions were estimated using different numbers of shots with the $N=scaled$ label referring to the number of shots scaling as $N=\lceil 5 \times 10^{3}\log(n)^{2}\log(2 \times 10^{2} (2n+1)) \rceil$. }
    \label{fig:error_in_the_inversion}
\end{figure}

There are several competing effects at play when inverting the response function to estimate the value of a particular field. Below we describe all of them.

\begin{itemize}
    \item \textit{The effect of noise with increasing system size}:
\end{itemize}

\begin{figure}[h!]
    \centering
    \includegraphics[scale=1]{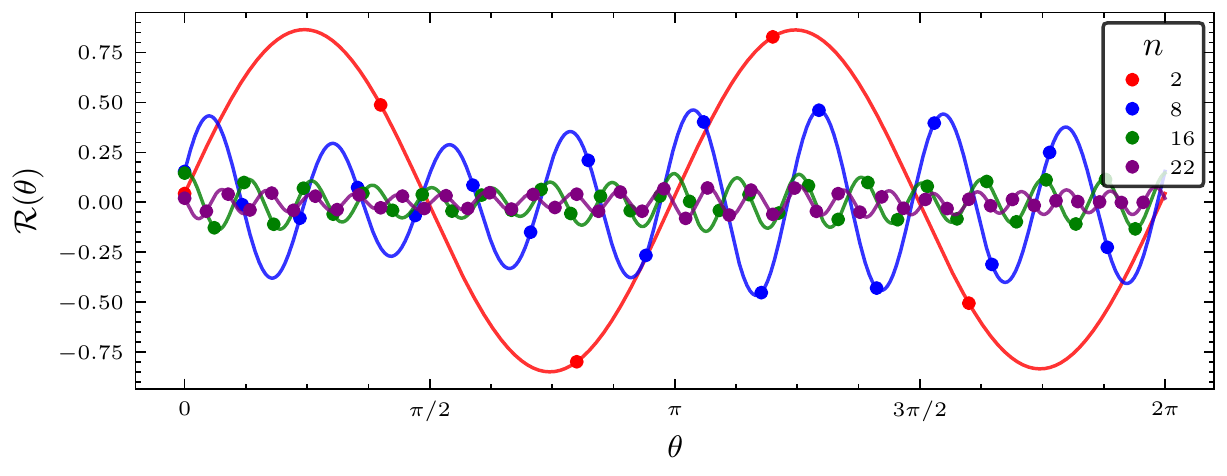}
    \caption{
    \textbf{Response functions obtained from the magnetometry task on IBM hardware}. This shows the inferred response functions obtained from the \textit{IBM\_Montreal} quantum computer. The number of shots used was $3.2 \times 10^{4}$ for each data point. }
    \label{fig:IBM_exp_data_same_axis}
\end{figure}

Firstly, as the system size increases the effect of noise becomes more significant, which in turn reduces the performance of the probe state (and overall sensing scheme) when estimating the field parameter. The intuition for this can be taken from Ref.~\cite{wang2020noise,franca2020limitations}, which shows that the measured signal from a noisy quantum computer decays exponentially with the number of layers, or depth, of the quantum circuit. In our case we are using $\mathcal{O}(n)$ layers to prepare our GHZ probe state. This results in a exponentially fast suppression of the response function with the number of qubits, resulting in a significant degradation of the probe state performance. This suppression can be observed in Fig. \ref{fig:IBM_exp_data_same_axis} which shows the amplitudes of the response functions as the system size is varied. Here one can verify that increased problem size, implies deeper circuits, and thus flatter response functions leading to a larger error in inversion. We do highlight, however, that  a larger estimation error is not due to the inference method we present here (whose goal is to characterize the response function), but rather a feature of the sensing set-up being noisy and thus sub-optimal. 


\begin{itemize}
    \item \textit{Finite sampling effects}:
\end{itemize}

In order to compensate for the noise-induced degradation of the sensing scheme sensitivity as the system size is increased, one needs to increase the number of shots used when determining the response function. As verified in noisy numerical simulations presented in Fig.~\ref{fig:error_in_the_inversion}, increasing the number of shots systematically improves the estimation errors yielded by the inferred response but not the errors corresponding to the cosine fit, as the latter is intrinsically limited in its ability to capture the noisy behaviour of the experimental response. This is because $g(\theta)$ may be an ill-approximation to how the noise truly acts, which cannot be solved by simply increasing the number of shots. In order to observe this effect for the small system sizes probed, the noise rates in our ion trap emulation were increased by a factor of $20$. 

\begin{itemize}
    \item \textit{The details of the inversion task}:
\end{itemize}

As previously mentioned, the inversion was calculated in a range taken to be $\theta \pm \pi/10n$. Therefore, the maximum possible error for each predicted field value is the extremity of this range, that is $\pi/10n$. In particular, one can see from Fig. \ref{fig:error_in_the_inversion} that the errors entailed by the fitting function $g(\theta)$ quickly saturate this worst case scenario, and that their decrease is only due to the size of the inversion ranges considered.

\section{Training a measurement operator using the response function}\label{app:training}
Here we showcase one application of our inference approach for a magnetometry sensing task where
one is given a fixed input probe state and wants to train a measurement operator to obtain the best possible sensitivity. In particular, we will take a variational approach~\cite{cerezo2020variationalreview} to quantum metrology~\cite{meyer2020variational, koczor2020variational,beckey2020variational,kaubruegger2021quantum} where we train a quantum neural network (QNN) to perform the optimal measurement. 

The setting is as follows. First, one prepares a fixed input state. In our case, we prepare a GHZ state on a system of $4$ qubits. Then we proceed by optimizing a quantum convolutional neural network (QCNN)\cite{cong2019quantum} such as to perform the optimal measurement (we chose this QCNN architecture as it is immune to barren plateaus~\cite{pesah2020absence,cerezo2020cost,sharma2020trainability}). In particular, we know that the GHZ state is capable of achieving the Heisenberg limit, meaning that there should exist an optimal measurement (the parity measurement) which achieves such limit.

\begin{figure}[h]
    \centering
    \includegraphics[scale=0.65]{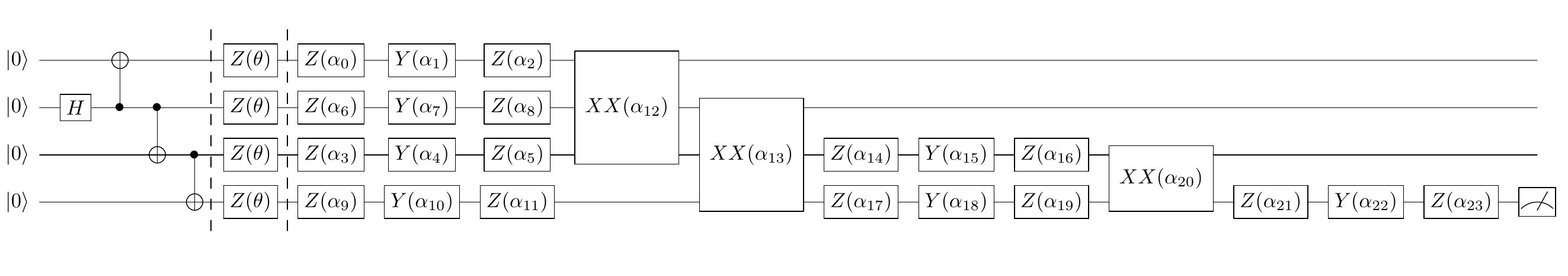}
    \caption{\textbf{Circuit used in the variational quantum metrology numerical experiment}. Four-qubit variational experimental set up where a QCNN is trained to learn the optimal measurement.}
    \label{fig:variational_circuit}
\end{figure}
The circuit diagram for our variational set-up is shown in Fig.~\ref{fig:variational_circuit}. To train the measurement operator we minimize a mean squared error cost function of the form 
\begin{equation}
    \LC_{mse}(\vec{\alpha}) = \frac{n}{2\pi}\int_{-\frac{\pi}{n}}^\frac{\pi}{n}d\theta (\widetilde{R}(\vec{\alpha};\theta) / n -\theta)^2
\end{equation}
where $\widetilde{R}(\vec{\alpha};\theta)$ is the inferred response function which depends on the trainable circuit parameters $\vec{\alpha}$ and $n$ is the number of qubits (here $n=4$). We note that in this cost function we have multiplied the inferred response function by a term of $1/n$ in attempt to enforce the desired Heisenberg scaling. A key feature of this cost is that it can be evaluated analytically given the inferred response function.

In Fig.~\ref{fig:training_qcnn}(a) we show the mean squared error loss function versus the number of epochs, where one can see that the optimizer is indeed able to minimize the loss function. In Fig.~\ref{fig:training_qcnn}(b) we show the sensitivity of the scheme before the QCNN is trained (i.e., for some set of random parameters $\vec{\alpha}$), and we can clearly see that for all parameter values the sensitivity is well above the Heisenberg limit ($1/n^2=1/16\sim0.0625$), and even the Standard limit ($1/4=0.25$). However, after training the QCNN, Fig.~\ref{fig:training_qcnn}(c) shows that the scheme can indeed reach the Heisenberg limit. We note that the simulations where  performed in a noiseless setting where we have not included  the effects of hardware noise or finite sampling. This proof of principle experiment highlights the utility of using the inference based sensing scheme in a variational quantum metrology setting. 

\begin{figure}
    \centering
    \includegraphics{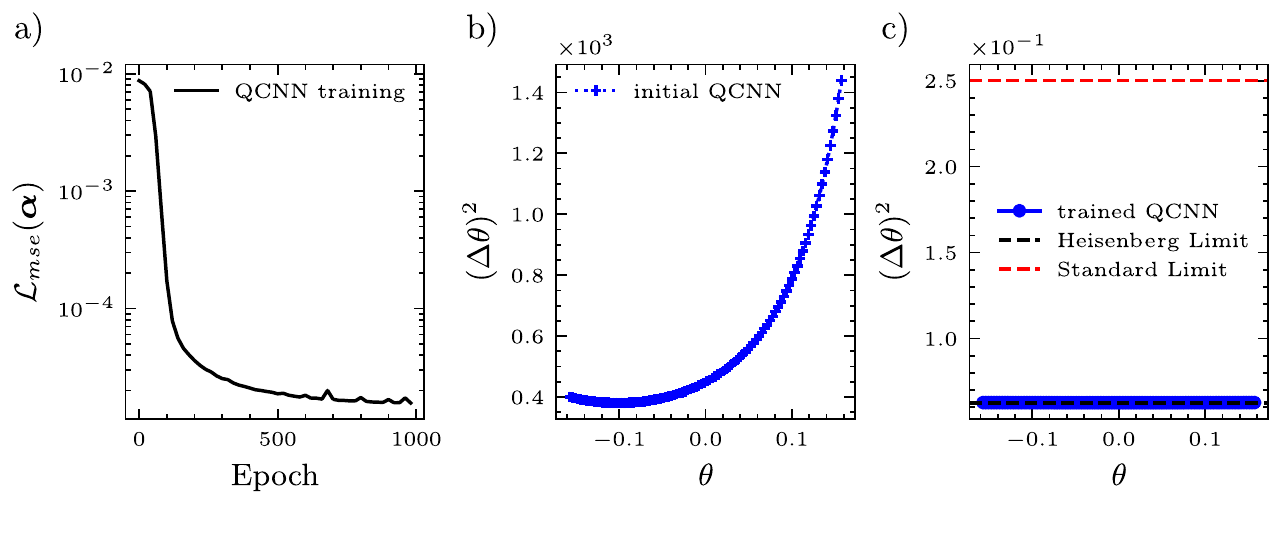}
    \caption{\textbf{Training a QCNN to perform the optimal measurement on a 4 qubit system.} Here we show how one can use a our approach to infer the response function for a QCNN that seeks to prepare the optimal measurement in a magnetometry task. Here the initial state is the GHZ state and we train using the $\LC_{mse}$ which we evaluate analytically. }
    \label{fig:training_qcnn}
\end{figure}

\section{Circuits}\label{app:circuits}
In this section, we display the circuit constructions used in each sensing protocol presented in the experimental and numerical results as described in the main text. For illustrative purposes, we present the circuits for $n=4$ qubits, which readily generalize to larger problem sizes. Vertical lines are used to separate the probe state preparation from the mechanism that encodes the parameter of interest.

In Fig. \ref{fig:circuits}, we present all circuit decompositions used on the manuscript. Figure \ref{fig:circuits}(a)-(b) corresponds to two different circuit decompositions for the GHZ magnetometry problem. It is important to mention that even though they look different, both constructions are equivalent. 
The circuit shown in Fig. \ref{fig:circuits}(a) is commonly used for the ion-trap numerical simulations, while the circuit shown in Fig. \ref{fig:circuits}(b) favors the connectivity and native gates of the \textit{IBM\_Montreal} quantum computer. In the former, we assume an ion-trap quantum computer with full connectivity, enabling a GHZ state preparation in $\mathcal{O}(\log(n))$ depth. Once the GHZ is prepared, a global $Z$ rotation with angle $\theta$, mimicking the effect of a magnetic field, is applied to the probe state followed by a parity measurement. 

For the spin squeezing setup, no step of state preparation is necessary as the coherent spin probe state $\vert 0 \rangle^{\otimes n}$ is the default initialization in current quantum computers. 
In Fig. \ref{fig:circuits}(c), we show the gate decomposition corresponding to the one-axis twisting Hamiltonian mechanism used to characterize squeezing. 
In this particular setting we only need to measure one qubit, which is chosen to be the last one, but measurements on any other qubit would produce the same result. 

Finally, in Fig. \ref{fig:circuits}(d) we show the circuit decomposition for the preparation of a random probe state obtained by application of a random unitary composed of $4$ layers of hardware efficient ansatz. The parameter values of $\{\alpha_{j}, \beta_{i}, \delta_{i}\}$ are randomly drawn. 
After encoding of the parameter $\theta$, each qubit is measured individually. 

\begin{figure}[h]
    \centering
    \includegraphics[scale=1]{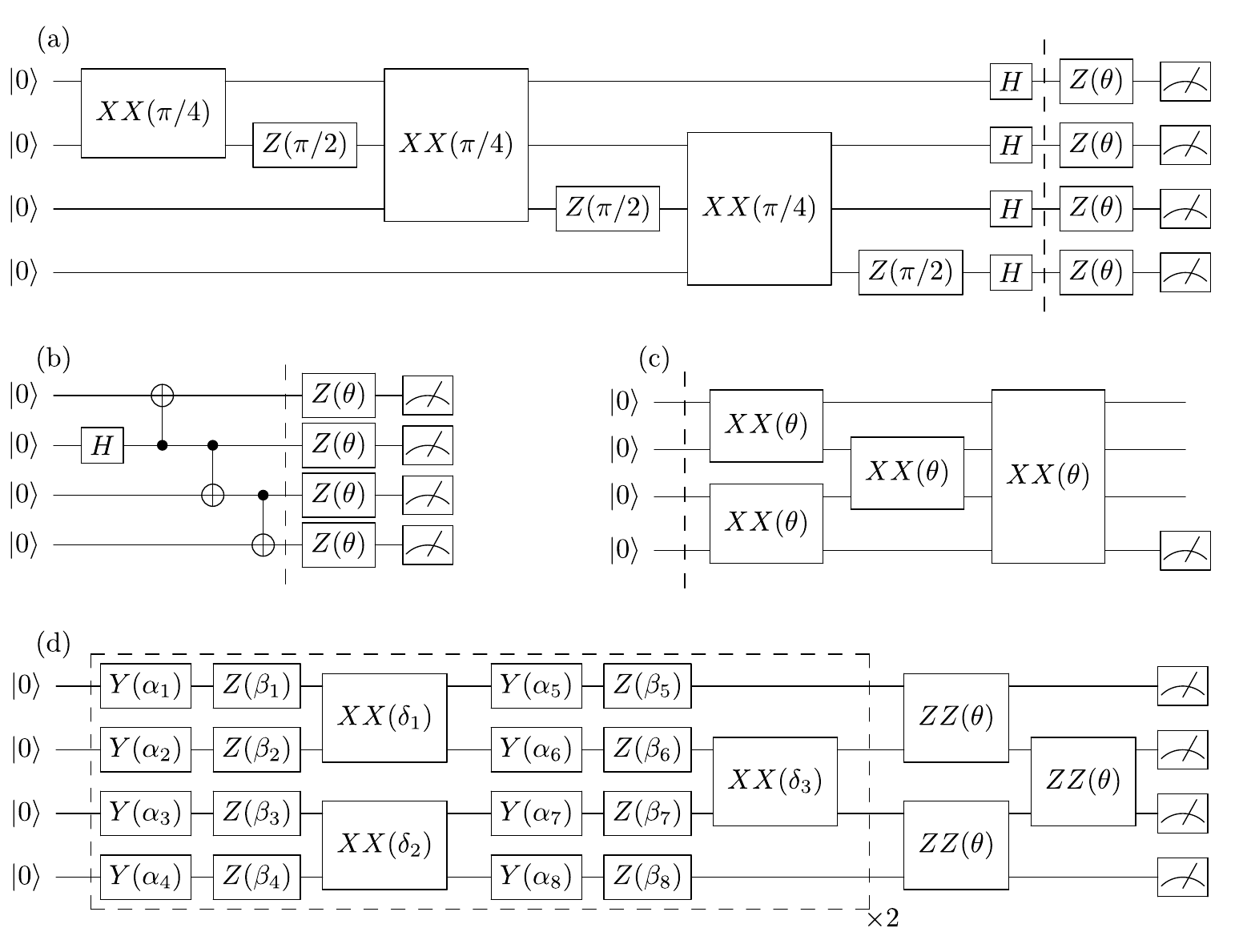}
    \caption{\textbf{Circuit structures used in this work.} Four-qubit circuit decomposition used for: the GHZ magnetometry problem on (a) ion-trap and (b) IBM architectures, (c) spin squeezed setup and (d) the random unitary.}
    \label{fig:circuits}
\end{figure}


\end{document}